\providecommand{\keywords}[1]
{
  {\small	
  \textbf{\textit{Keywords---}} #1}
}
\newcommand{\Rpn}{\R_+^n}
\newcommand{\bz}{\boldsymbol{z}}
\newcommand{\bvrho}{\boldsymbol{\varrho}}
\newtheorem{theorem}{Theorem}[section]
\newtheorem{definition}[theorem]{Definition}
\newtheorem{assumption}[theorem]{Assumption}
\newtheorem{proposition}[theorem]{Proposition}
\newtheorem{corollary}[theorem]{Corollary}
\newtheorem{example}[theorem]{Example}
\newtheorem{optimization}[theorem]{Optimization Problem}
\newtheorem{remark}[theorem]{Remark}
\newcommand{\KL}[2]{D_\mathrm{KL}({#1} \,\Vert\, {#2})}
\newcommand{\Id}{{\mathds{1}}}
\newcommand{\R}{{\mathds{R}}}
\newcommand{\E}{{\mathbb{E}}}
\newcommand{\Q}{{\mathbb{Q}}}
\renewcommand{\P}{{\mathbb{P}}}
\newcommand{\eps}{\varepsilon}
\newcommand{\vs}{{\varsigma}}
\newcommand{\V}{{\mathcal{V}}}
\newcommand{\vsmallskip}{\hspace{0.1em}}
\newcommand{\boldalpha}{{\boldsymbol{\alpha}}}
\newcommand{\boldeta}{{\boldsymbol{\eta}}}
\newcommand{\G}{{\mathcal{G}}}
\newcommand{\A}{{\mathcal{A}}}
\newcommand{\mfA}{{\mathfrak{A}}}
\newcommand{\mfC}{{\mathfrak{C}}}
\newcommand{\F}{{\mathcal{F}}}
\newcommand{\N}{{\mathcal{N}}}
\title{Optimal Robust Reinsurance with Multiple Insurers\footnote{Earlier versions have been presented at the Canadian Society of Applied and Industrial Mathematics (CAIMS) Annual Meeting 2024 (Kingston, Canada), the University of Copenhagen, the University of Waterloo (Canada), the 26th International Congress on Insurance: Mathematics and Economics (Edinburgh, Scotland), and the 2023 Statistical Society of Canada Annual Meeting (Ottawa, Canada).}}
\author[1,a]{Emma Kroell}
\author[1,b]{Sebastian Jaimungal}
\author[1,c]{Silvana M. Pesenti}
\affil[1]{Department of Statistical Sciences, University of Toronto}
\affil[a]{emma.kroell@mail.utoronto.ca}
\affil[b]{sebastian.jaimungal@utoronto.ca}
\affil[c]{silvana.pesenti@utoronto.ca}
\date{\today}
\begin{document}

\maketitle

\begin{abstract}
We study a reinsurer who faces multiple sources of model uncertainty. The reinsurer offers contracts to $n$ insurers whose claims follow compound Poisson processes representing both idiosyncratic and systemic sources of loss. As the reinsurer is uncertain about the insurers’ claim severity distributions and frequencies, they design reinsurance contracts that maximise their expected wealth subject to an entropy penalty. Insurers meanwhile seek to maximise their expected utility without ambiguity. We solve this continuous-time Stackelberg game for general reinsurance contracts and find that the reinsurer prices under a distortion of the barycentre of the insurers’ models. We apply our results to proportional reinsurance and excess-of-loss reinsurance contracts, and illustrate the solutions numerically. Furthermore, we solve the related problem where the reinsurer maximises, still under ambiguity, their expected utility and compare the solutions. \end{abstract}

\keywords{Stackelberg game, reinsurance, barycentre, Kullback-Leibler divergence, excess-of-loss reinsurance, proportional reinsurance}

\onehalfspacing

\section{Introduction}
Reinsurance is an integral part of the insurance industry, allowing smaller, first-line insurers to take on a larger client base while passing on large risks to reinsurance companies. Therefore, reinsurance companies play an important stabilising role in financial systems --- but they remain vulnerable to events that cause large, industry-wide losses, such as natural disasters. The cost of these events is increasing, with 100B USD in insured losses from natural catastrophes in 2022.\footnote{\href{https://www.swissre.com/press-release/Insured-losses-from-natural-catastrophes-break-through-USD-100-billion-threshold-again-in-2022/e74c6ce7-8914-45d6-a384-df71dbeb87b8}{https://www.swissre.com/press-release/Insured-losses-from-natural-catastrophes-break-through-USD-100-billion-threshold-again-in-2022/e74c6ce7-8914-45d6-a384-df71dbeb87b8}} Furthermore, while the reinsurance industry is large, it is also concentrated ---  reinsurance premia of the top 50 reinsurance players totalled 353B USD in 2021, with 24\% underwritten by the top two.\footnote{\href{https://www.insurancejournal.com/news/international/2022/08/17/680847.htm}{https://www.insurancejournal.com/news/international/2022/08/17/680847.htm}} Given these facts, we examine the pricing and risk-management behaviour of a large reinsurance company who contracts with many different first-line insurance companies but exhibits different degrees of ambiguity for each insurance company. The insurers are insuring similar risks but have different models for losses. We approach this problem as a Stackelberg game and examine the optimal behaviour of the large reinsurer and the individual insurance companies. 

The problem of optimal reinsurance has been studied extensively in the one-period setting, dating back to \textcite{borch1960reciprocal}. 
We focus on optimal reinsurance in continuous time, where the literature has typically taken the insurer's perspective. Examples include \textcite{hojgaard1998optimal}, \textcite{Irgens2004}, and \textcite{Liang2011} who look at an insurer purchasing reinsurance to maximise their wealth or utility and  \textcite{schmidli2001optimal} and \textcite{schmidli2002minimizing} who consider an insurer minimising their probability of ruin. Recent studies allowed for ambiguity-aversion of the insurer (\cite{Yi2013}; \cite{Li2018}). However, these studies leave open the question of whether the reinsurer would accept the individual insurer's optimal contract.
Recent years have seen considerable work aimed at closing this gap by modelling the behaviour of the reinsurer as a continuous-time Stackelberg game. This was first introduced by \textcite{ChenShen2018}, who solve the optimal proportional reinsurance problem when the insurer and reinsurer's surpluses follow diffusion processes and both agents maximise exponential utility functions. \textcite{ChenShen2019} consider the mean-variance problem for a general reinsurance treaty, and find that proportional reinsurance is optimal under a variance premium while excess-of-loss reinsurance is optimal with an expected value premium. 

Ambiguity-aversion in the Stackelberg game setting was introduced by \textcite{Hu2018continuoustime} for proportional reinsurance contracts, where the insurer's process is approximated by a diffusion. They allow for uncertainty in the drift of the insurer's claims process, and show that the ambiguity-averse reinsurer expects higher claims and sets higher reinsurance prices, which decreases the demand for reinsurance. In the Cram\'er-Lundberg setting, \textcite{Hu2018} allow for ambiguity of the reinsurer about the intensity of the insurer's compound Poisson process, and show that the ambiguity-averse reinsurer expects a higher intensity of the compound Poisson process. Furthermore, for exponential claim distributions, they show that this causes the reinsurer to charge a higher premium, and the insurer to purchase less reinsurance for both proportional and excess-of-loss reinsurance contracts. The framework is further extended by \textcite{Gu2020}, who study the excess-of-loss reinsurance problem and allow for investment in a risky asset by both the insurer and reinsurer, and show that the behaviour of the agents at equilibrium depends on each other's ambiguity aversions. Most recently, \textcite{Cao2022} and \textcite{Cao2022generaldivergence} study the related expected wealth reinsurance problem with squared deviation and general divergence ambiguity penalties, respectively. They show that the insurer's and reinsurer's probability distortions are increasing in the loss, and that proportional reinsurance is optimal under squared-error divergence, but not necessarily under a general divergence measure.

While the optimal reinsurance problem with more than a single insurer or reinsurer has been well-studied in the one-period setting (see, e.g., \textcite{boonen2016role}, \textcite{BoonenGhossoub2019}, and \textcite{BoonenGhossoub2021}), to our knowledge, its study in continuous time is limited. An exception is \textcite{Cao2023treeVchain} and \textcite{Cao_Li_Young_Zou_2023}, who consider the reinsurance problem with multiple reinsurers and different configurations of reinsurance agreements. \textcite{Bai2022} consider the case of two competing insurers in a hybrid Stackelberg game, who maximise their expected wealth with delay and show that their competition reduces their demand for reinsurance.

This work is distinct from the aforementioned literature in several ways. First, our focus is on the behaviour of an ambiguity-averse reinsurer who takes on potentially different contracts from \emph{multiple} insurers, each having different loss models. The reinsurer allows for distortions to both the intensity and severity of the loss distributions, and moreover accounts for different degrees of ambiguity for each insurer's loss models using a weighted Kullback-Leibler divergence penalty. Second, the losses faced by the different insurers consist of idiosyncratic losses, characteristic to each insurer individually, and systemic losses that affect all insurers simultaneously.
Third, rather than considering a specific reinsurance contract, we solve the Stackelberg game for a general reinsurance contract, encompassing proportional and excess-of-loss reinsurance. Our reinsurance contracts take the form of a \emph{retention function} $r(z, \alpha_k)$, where $z$ is the loss and $\alpha_k$ is an aspect of the contract chosen by the insurer.\footnote{For an overview of reinsurance forms, see, e.g., \textcite[Chapter 2]{Albrecher2017} or \textcite[Chapter I.4]{Asmussen2020}.} We consider a broad class of retention functions which include most common reinsurance contracts, including proportional reinsurance and excess-of-loss reinsurance, with or without a policy limit. Fourth, we compare the wealth- and exponential utility- maximisation cases, bringing together these two distinct approaches in the literature. 

We derive the Stackelberg equilibrium for a general reinsurance contract, and find that the reinsurer's optimal pricing model is closely related to the barycentre of the insurers' models. We show that the optimal reinsurance prices, demand, and the reinsurer's pricing model are independent of time. Furthermore, we show that when the reinsurer is ambiguity-averse, their optimal pricing model depends on the size of their aggregate loss. We further consider the Stackelberg equilibrium for specific reinsurance contracts --- proportional reinsurance and excess-of-loss reinsurance, with or without a policy limit --- and illustrate the solution numerically in the case of two insurers. Finally, we solve the related problem of an ambiguity-averse reinsurer who maximises a CARA utility and compare the solutions, thus generalising known results in the literature about utility-maximising reinsurers. We find that the safety loadings charged to the insurers are increasing in the reinsurer's risk aversion parameter.

The rest of the paper is structured as follows. \cref{sec:ins_mrkt} details the wealth evolution of the insurers and the reinsurer and defines the general reinsurance contract. \cref{sec:ins_prob} provides the solution to the insurer's problem. \cref{sec:reins_prob} details and solves the reinsurer's optimisation problem, and presents the Stackelberg equilibrium.
In \cref{sec:application}, we examine the Stackelberg equilibrium for different reinsurance contract types, namely, proportional reinsurance and excess-of-loss reinsurance with and without a policy limit. We provide numerical illustrations of the equilibrium and its key features for the case of two insurers with given severity distributions. Finally, in \cref{sec:utility_max} we solve the related utility-maximisation problem for the reinsurer under ambiguity, and compare it to the wealth-maximisation approach.

\section{Reinsurance market}
\label{sec:ins_mrkt}

We consider a reinsurance market with $n \in \mathbb{N}$ non-life insurance companies and a single reinsurer over a finite time horizon $[0,T]$. The time $T$ is the expiration date of the reinsurance contracts, which are entered into at time 0. Each insurer provides non-life insurance policies to policyholders and receives regular premium payments. They further aim to transfer a part of their potential losses to a reinsurance company. Each insurer decides their optimal level of reinsurance based on the reinsurance price and the contract terms, with the goal of maximising their expected utility. Furthermore, each insurer has their own model of the insurable losses in the market, which they use to estimate their losses and optimise their behaviour. On the other hand, the reinsurer determines the prices at which they offer reinsurance to each individual insurer, aiming to optimise their expected wealth subject to model uncertainty. Given the $n$ different models for the insurable losses (from the $n$ insurers), the reinsurer has to determine which model to use for pricing, accounting for the differing models and potential ambiguity about them. We formalise this as a Stackelberg game, where the reinsurer is the leader and the insurers are the followers.

\subsection{The probabilistic setup}

We consider a complete and filtered measurable space $(\Omega, \F, \mathbb{F}=(\F_t)_{t \in [0,T]})$ and $n$ equivalent probability measures $\P_1, \ldots, \P_n$. For $k \in \N:=\{1, \ldots, n\}$, the probability measure $\P_k$ encapsulates the $k$-th insurer's beliefs about the loss distributions. We  refer to an individual insurer as insurer-$k$ or the $k$-th insurer, $k \in \N$. 
Throughout, unless stated otherwise, we assume the filtration $\mathbb{F}$ is such that all processes defined below are progressively measurable with respect to it. 

There are two types of loss events. Either an idiosyncratic loss event occurs that affects only a single insurer, having no effect on the others, or a systemic loss event occurs, affecting all $n$ insurers. This distinction is important for the reinsurer, who needs to model all losses to appropriately price reinsurance contracts. In particular, they may use different modelling assumptions for the idiosyncratic and systemic losses.  

To formulate this mathematically, let $M_1(dz_1,dt),\ldots,M_n(dz_n,dt)$ be $n$ independent Poisson random measures (PRMs). Each PRM is associated with a single insurer, describing their idiosyncratic (individual) losses. Under the probability measure $\P_j$, $j\in\N$, the compensator of $M_k(dz_k,dt)$, $k \in \N$, is denoted by $\mu_{k,j} (dz_k,dt)$, i.e., $\int_0^t \int_{\R_+} z_k \left[M_k(dz_k,du)-\mu_{k,j}(dz_k,du)\right]$ is a $\P_j$-martingale, where $\R_+: = (0, \infty)$.
We assume that each insurer's losses are compound Poisson, so we may write 
 \begin{equation}
 \label{eqn:indep_comp}
     \mu_{k,j} (dz_k,dt) = \mu_{k,j} (dz_k) \, dt = \lambda_{k,j}^I \, F_{k,j}^I(dz_k) \, dt \,,
 \end{equation}
where $F_{k,j}^I$ is the cumulative distribution function of insurer-$k$'s idiosyncratic losses with support on $\R_+$ and $\lambda_{k,j}^I\in \R_+$ is the arrival rate of insurer-$k$'s individual losses under $\P_j$. We assume the compensator admits a density $w_{k,j}(z_k)$ such that $\mu_{k,j}(dz_k,dt) = w_{k,j}(z_k)\, dz_k\,dt$.

Furthermore, let the PRM $N(d\bz,dt)$, $\bz = (z_1, \ldots , z_n)$, be the source of systemic (shared) risk, assumed to be independent of $M_k$, for all $k\in\N$. Throughout, we use bold text to denote vectors. Under the probability measure $\P_j$, $j\in\N$, the compensator of $N(d\bz,dt)$ is denoted by $\nu_{j}(d\bz,dt)$, i.e., $\int_0^t \int_{\Rpn} \bz \big[N(d\bz,du) -\nu_{j}(d\bz,du)\big]$ is a $\P_j$-martingale. We further assume that the systemic loss is compound Poisson, i.e.,
\begin{equation}
    \nu_{j}(d\bz,dt) = \lambda_{j}^S \,G_j(d\bz) \,,
\end{equation}
where, under $\P_j$, $j\in\N$, $\lambda^S_{j}\in \R_+$ is the arrival rate of the systemic loss and $G_{j}$ is the joint cumulative distribution function of the systemic losses with support on $\Rpn$.  We assume the compensator admits a density $v_{j}(\bz)$ such that $\nu_{j}(d\bz,dt) = v_{j}(\bz)\, d\bz\,dt$.

As the systemic process generates marginal losses for each insurer, we further denote the insurer-$k$'s marginal compensator of the systemic loss under $\P_j$, $j\in\N$, by $\nu_{k,j}$, i.e.,
\begin{align*}
    \nu_{k,j}(dz_k)dt = \nu_j(\R_+, \ldots, \R_+, dz_k, \R_+, \ldots, \R_+) \,dt 
    =\lambda_j^S \int_{\R^{n-1}_+} G_j(d\bz) \,dt
    = \lambda^S_{j} \, F^S_{k,j}(dz_k) \, dt\,,
\end{align*}
where $F^S_{k,j}$ denotes the marginal distribution of the systemic loss under $\P_j$, $j\in\N$. Insurer-$k$'s losses are thus the sum of their idiosyncratic losses and the marginal losses resulting from the systemic process.

\subsection{Insurers' wealth evolution}

Given this probabilistic setup, we next describe the individual insurers' losses. Insurer-$k$ calculates losses using the probability measure $\P_k$, which summarises their views on the insurable losses. We therefore derive a concise representation for their losses under this probability measure. 

As discussed above, they face two sources of loss. The first source of loss is the claims stemming from idiosyncratic events that only they face, which arrive according to a Poisson process with constant intensity $\lambda_{k,k}^I$ and claim severity distribution function $F^I_{k, k}(\cdot)$. The second source is the claims stemming from the systemic event, shared with all other insurers, and which arrive with constant intensity $\lambda^S_k$ and claim severity distribution function $F^S_{k, k}(\cdot)$. Therefore, each insurer's loss process follows a Cram\'{e}r-Lundberg model, where, under their probability measure $\P_k$, claims arrive with intensity $\lambda_k$ and the claim severity distribution function is $F_k$, where
\begin{equation}
\label{eqn:ins_cPois}
    \lambda_k := \lambda_k^S + \lambda_k^I, \qquad 
    F_k(dz_k) := \frac{\lambda_k^S}{\lambda_{k}} F_{k,k}^S(dz_k) + \frac{\lambda_{k,k}^I}{\lambda_k} F_{k,k}^I(dz_k) \,.
\end{equation}

Each insurer sets a constant premium rate $p_k^I$; for example the expected value principle with fixed safety loading.
Each insurer then purchases reinsurance on their aggregate claim amount. The available reinsurance contracts are characterised by a retention function $r$. Insurer-$k$ chooses an aspect of the reinsurance contract $a_k \in \A\subseteq \R$, which we refer to as insurer-$k$'s control. The interpretation of $a_k$ and the set of possible controls $\A$ will vary depending on the form of the retention function, as shown in the examples below. For a choice of retention function $r$ and insurer-$k$'s control $a_k$, the reinsurer charges a reinsurance premium $p^R$ with safety loading $c_k$.

\begin{definition}[Reinsurance Contract]
A reinsurance contract is characterised by a retention function $r\colon [0,\infty) \times \A \to [0,\infty)$ and a corresponding reinsurance premium $p^R\colon \A\times \R_+ \to \R_+$. The retention function is non-decreasing in the first argument, satisfying $0 < r(z, a) \leq z$ for all $z \in \R_+$, $a \in \A$ and $r(0,a)=0$ for any $a\in\A$.

For a tuple $(a, c) \in \A \times \R_+$, the reinsurer agrees to cover $z-r(z, a)$ for a premium $p^R (a, c)$, where $c$ is the reinsurer's safety loading.
\end{definition}

The bounds on the retention function guarantee that the insurer must retain some of the losses. Typically the inequality, $r(z,a)\le z$, is strict for some $z\in\R_+$.

\begin{assumption}
\label{assump:retentionfuncs}
    We consider retention functions that are continuous in the loss $z$ and increasing and almost everywhere differentiable in the control $a$.
\end{assumption}

Of particular interest are piecewise linear retention functions, as these are the reinsurance contracts encountered in practice. Examples of such retention functions include the following:

\begin{example}
\label{ex:prop}
Proportional reinsurance: $r(z, a) = a \vsmallskip z$, $a \in (0,1]$. Here, the insurer chooses the proportion of the loss to retain.
\end{example}

\begin{example}
\label{ex:XL}
Excess-of-loss insurance: $r(z, a) = \min \{ a, \,z \}$, $a \in \R_+$. Here, the insurer chooses the retention limit, beyond which the reinsurer covers any excess losses.
\end{example}

\begin{example}
\label{ex:XL_capped}
Excess-of-loss insurance with a fixed policy limit $\ell$: $r(z, a) = \min \{a, z \}\, \Id_{z \leq a + \ell} + (z - \ell)\,\Id_{z > a + \ell} $, $a \in \R_+$, $\ell > 0$. Here, the insurer chooses the retention limit, beyond which the reinsurer covers any excess losses. If however the losses exceed the limit $a+\ell$, the losses are borne by the insurer. We assume that the insurer cannot choose $\ell$ as the policy limit is in practice set by the reinsurer, e.g. by the purchase of further reinsurance (see, e.g., \textcite[Chapter 3.4]{Mikosch2009}).
\end{example}

\begin{remark}
Instead of considering retention functions of the form $r(z,a)$ with control variable $a$, one could alternatively place no restrictions on the functional form of the retention function other than $0 \leq r(z) \leq z$, and allow the insurer to optimise over this class of retention functions. This approach is taken by \textcite{Li2017}, who show that in this case the optimal contract is an excess-of-loss contract. In our setting, as the reinsurer may prefer to offer other types of contracts, such as proportional reinsurance, we do not make this assumption.
\end{remark}

The insurer's control of the retention function may vary over time, thus we denote by $\alpha_k := (\alpha_{t,k})_{t\in [0,T]}$ the insurer-$k$'s control process. Further, we denote by $\mfA$ the set of square-integrable, $\mathbb{F}$-predictable processes taking values in $\A$. We require that $\alpha_k\in\mfA$. 
Thus, the $k$-th insurer's wealth process $X_k := (X_{t,k})_{t\in[0,T]}$ evolves according to the stochastic differential equation (SDE)
\begin{equation}
\label{eq:insurer_wealth}
    X_{t,k} = X_{0,k} + \int_0^t\!\! \left[ p_k^I - p^R_{k}(\alpha_{u,k}, c_k) \right] du - \int_0^t \int_{\R_+} \!\!r(z_{k},\alpha_{u,k}) \,M_{k}(dz_k,du) - \int_0^t \int_{\Rpn} \!\!r(z_{k},\alpha_{u,k}) \,N(d \bz,du) \,,
\end{equation}
where $p_k^R$ is the reinsurance premium for insurer-$k$ with safety loading $c_k \in\R_+$.
The first integral is the difference of the premia the insurer collects from their policyholders and the premium they pay to the reinsurer. The second integral is the losses retained by the insurer from their idiosyncratic loss process, while the third integral is the losses retained by the insurer from the systemic loss process. 

We apply the retention function $r$ to each loss source separately (rather than jointly) since $M_k$ and $N$ are independent, and hence never generate losses at the same time. In fact, from insurer-$k$'s perspective, the second and third terms in \eqref{eq:insurer_wealth} could be treated as a single loss process, with compensator $\lambda_k F_k(dz_k)$, as given in \eqref{eqn:ins_cPois}, and thus be modelled as one process.

\subsection{Reinsurer's wealth evolution}

The reinsurer offers reinsurance contracts to each insurer, where they cover the losses not retained by insurer-$k$, i.e., $z - r(z,a_k)$, and charge a continuous premium rate $p^R_{k}(a_k,c_k)$. As the reinsurer knows how each insurer will respond to the price, their premium depends on the insurer's control $\alpha_k$. We assume that the reinsurer sets the reinsurance premium rate for insurer-$k$ using the expected value principle with deterministic safety loading $\eta_k: =(\eta_{t,k})_{t\in [0,T]}$. Following the convention in the literature (see, e.g., \textcite{Hu2018}, \textcite{Gu2020}, and \textcite{Cao2022}), we treat the safety loading $\eta_k$ as the price set by the reinsurer, to be optimised under the reinsurer's probability measure, while the premium itself is expressed under the insurer's probability measure $\P_k$. Therefore, the premium charged to insurer-$k$ is 
\begin{equation}
\label{eqn:reins_prem}
\begin{split}
    p^R_{t,k} = p^R_{t,k}(\alpha_k,\eta_{t,k}) &:=(1+\eta_{t,k}) \left[ \int_{\R_+} \!\!\! \left[ z_k - r(z_k,\alpha_{t,k} )\right] \mu_{k,k}(dz_{k}) + \int_{\Rpn} \!\!\! \left[ z_{k} - r(z_k,\alpha_{t,k} )\right] \nu_k(d\bz) \right] \\
    &=(1+\eta_{t,k}) \, \lambda_k \int_{\R_+} \!\!\! \left[ z_k- r(z_k,\alpha_{t,k} )\right] F_k(dz_k)\, .
\end{split}
\end{equation}

Let $\boldeta := (\eta_1,\ldots,\eta_n)$. We assume each component of $\boldeta$ is a non-negative square integrable deterministic process and denote by $\mfC$ the space of such $\boldeta$. Summing over the premia charged to each insurer and the losses taken on from each insurer, the reinsurer's wealth process $Y := (Y_t)_{t\in[0,T]}$ evolves according to the SDE
\begin{equation*}
    \begin{split}
        Y_t =Y_0 &+ \sum_{k\in\N}\int_0^t  p^R_{k}(\alpha_{u,k},\eta_{u,k}) \, du \, -  \sum_{k\in\N} \int_0^t \!\! \int_{\R_+} \!\! \big[z_{k}-r(z_{k},\alpha_{u,k})\big] M_{k}(dz_k,du)\\
        &- \int_0^t \int_{\Rpn} \sum_{k\in\N} \, \big[z_{k}-r(z_{k},\alpha_{u,k})\big] N(d\bz,du) \, .
    \end{split}
\end{equation*}

With the evolution of the wealth of the insurers and reinsurer at hand, we next discuss the optimal strategy of each individual insurer, who maximises their utility.  

\section{Insurers' utility maximisation strategy}
\label{sec:ins_prob}

The problem of optimal reinsurance from the insurer's perspective has been well studied in the literature for particular retention functions and various objectives for the insurer. While a key focus of this work is on the behaviour of the reinsurer, in this section we present the solution to the insurers' optimal reinsurance problem in our framework, for a fixed but unspecified retention function. The insurers' optimal demand for different reinsurance contracts given a fixed price will be key to solving the reinsurer's optimal pricing problem.

In our setting, each insurer seeks to maximise their expected utility, given their beliefs about the losses. We assume that insurer-$k$ has an exponential, or constant absolute risk aversion (CARA), utility function $u_k(x) = -\frac{1}{\gamma_k} e^{-\gamma_k x}$ with risk aversion parameter $\gamma_k > 0$.

\begin{optimization}
Insurer-$k$ seeks the contract parameter that attains the supremum
\begin{equation}
    \label{eqn:P-I}
    \sup_{\alpha_k\in\mfA} \E^{\P_k} \left[\, u_k(X_{T,k}) \,\right]\,,
    \tag{$P_I$}
\end{equation}
where $\E^{\P_k}[\cdot]$ denotes the expected value under $\P_k$.
\end{optimization}

The proposition below characterises the optimal choice of the insurer's control $\alpha_k$. The condition may appear strong, however, as far as we are aware, it is satisfied by all examples in the extant literature. 

\begin{proposition}
    \label{prop:insurer_optim}
    For $k\in\N$, $c \in \R_+$, and a retention function $r$, consider the following non-linear equation for $a\in\A$
    \begin{equation}
        \label{eq:insurer_optim_alpha}
        \int_{\R_+} \!\! \partial_{a}r(z_k,a) \left\{(1+c)-e^{\gamma_k r(z_k,a)}\right\} F_k(dz_k)  = 0 \, 
    \end{equation}
and denote by $\alpha_k^\dagger[c]$ its solution, if it exists. If furthermore, 
    \begin{equation}
    \label{assump:convexinsurer}
        \int_{\R_+}  \!\!  \partial_a^2 r(z_k,a)\Big|_{a=\alpha_k^\dagger[c]}
        \left( (1 + c) - e^{\gamma_k r(z_k,\alpha_k^\dagger[c])} \right) F_k(dz_k)  \leq 0\,,
        \quad \forall\;  t\in[0,T]\,,
    \end{equation}
then the process $\alpha_{t,k}^*:=\alpha^\dagger_k[\eta_{t,k}]$, $t\in[0,T]$, is insurer-$k$'s optimal control in feedback form of problem \eqref{eqn:P-I}.
\end{proposition}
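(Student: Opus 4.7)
The plan is to solve insurer-$k$'s problem by dynamic programming with the standard CARA ansatz. Define the value function $V_k(t,x) := \sup_{\alpha_k \in \mfA} \E^{\P_k}[u_k(X_{T,k}) \mid X_{t,k}=x]$. Given the dynamics in \cref{eq:insurer_wealth}, the associated Hamilton–Jacobi–Bellman equation reads
\begin{equation*}
\partial_t V_k + \sup_{a \in \A} \left\{ \left[p_k^I - p_k^R(a, \eta_{t,k})\right] \partial_x V_k + \lambda_k \int_0^\infty \! \left[V_k(t, x - r(z,a)) - V_k(t,x)\right] F_k(dz) \right\} = 0,
\end{equation*}
with terminal condition $V_k(T,x) = -\tfrac{1}{\gamma_k} e^{-\gamma_k x}$. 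Motivated by the CARA form, the natural ansatz is $V_k(t,x) = -\tfrac{1}{\gamma_k} e^{-\gamma_k x} h_k(t)$ with $h_k(T) = 1$ and $h_k > 0$ on $[0,T]$.

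Substituting this ansatz causes the factor $-\tfrac{1}{\gamma_k} e^{-\gamma_k x}$ to appear in every term, and the HJB equation decouples into a linear ODE for $h_k$ and a pointwise-in-$t$ static optimisation
\begin{equation*}
\sup_{a \in \A} \left\{ -p_k^R(a, \eta_{t,k}) - \frac{\lambda_k}{\gamma_k} \int_0^\infty \left[e^{\gamma_k r(z,a)} - 1\right] F_k(dz) \right\}.
\end{equation*}
Expanding $p_k^R(a, \eta) = (1 + \eta)\lambda_k \int_0^\infty [z-r(z,a)] F_k(dz)$ and differentiating in $a$ (interchanging $\partial_a$ with the $F_k$-integral via \cref{assump:retentionfuncs} and dominated convergence), the first-order condition reduces exactly to \cref{eq:insurer_optim_alpha} with $c = \eta_{t,k}$, whose solution by definition is $\alpha_k^\dagger[\eta_{t,k}]$.

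To confirm this critical point is a maximiser, I would compute the second derivative in $a$ of the objective, which equals
\begin{equation*}
\lambda_k \int_0^\infty \partial_a^2 r(z,a) \left[(1+\eta_{t,k}) - e^{\gamma_k r(z,a)}\right] F_k(dz) - \lambda_k \gamma_k \int_0^\infty (\partial_a r(z,a))^2 e^{\gamma_k r(z,a)} F_k(dz).
\end{equation*}
The second term is manifestly non-positive, while \cref{assump:convexinsurer} forces the first term to be non-positive at $a = \alpha_k^\dagger[\eta_{t,k}]$, so the critical point is a maximum. Since $\eta_{\cdot,k} \in \mfC$ is deterministic and square-integrable, the ODE for $h_k$ has a deterministic coefficient and admits a unique strictly positive solution on $[0,T]$.

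The closing step is a standard verification argument: applying It\^o's formula for jump processes to $V_k(t, X_{t,k})$ along the candidate control $\alpha^*_{t,k} = \alpha_k^\dagger[\eta_{t,k}]$ yields a local $\P_k$-martingale which, thanks to the integrability built into $\mfA$ and the boundedness of $\eta$ on $[0,T]$, is a true martingale; taking expectations then gives $V_k(0, X_{0,k}) = \E^{\P_k}[u_k(X_{T,k}^{\alpha^*})]$ and identifies $\alpha^*$ as optimal. The main technical obstacle I anticipate is the integrability/admissibility bookkeeping required to justify both the interchange of derivative and integral in the HJB derivation and the upgrade from local to true martingale in verification; given the structural assumptions on $r$, $\A$, and $\mfC$ this is routine but tedious to spell out for a generic retention function.
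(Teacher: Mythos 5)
Your proposal is correct and follows essentially the same route as the paper: dynamic programming with the CARA ansatz (your $h_k(t)$ is the paper's $e^{-\gamma_k f(t)}$ with $g\equiv 1$), reduction to a static optimisation whose first-order condition is \cref{eq:insurer_optim_alpha}, and the second-order check via \cref{assump:convexinsurer} together with the manifestly non-positive term $-\lambda_k\gamma_k\int(\partial_a r)^2 e^{\gamma_k r}F_k(dz)$. Your added verification step and the explicit second-derivative computation are more detailed than the paper's proof but do not change the argument.
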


We can say more about the existence of a solution to \eqref{eq:insurer_optim_alpha} for specific choices of the retention function. In particular, we derive simpler expressions for commonly used retention functions considered in the literature, including those of proportional and excess-of-loss reinsurance contracts, as described below.

\begin{proof}
Insurer-$k$'s value function is
\begin{equation*}
    J_k(t,x) := \sup_{\alpha_k\in\mfA} \E^{\P_k}_{t,x} \left[ -\tfrac{1}{\gamma_k}e^{-\gamma_kX_{T,k}} \right] ,
\end{equation*}
where $\E^{\P_k}_{t,x}$ denotes the conditional $\P_k$-expectation given that the insurer's wealth process satisfies $X_{t-,k}=x$. The dynamic programming principle gives that the value function satisfies the following Hamilton-Jacobi-Bellman (HJB) equation 
\begin{alignat*}{2}
    0 &= \partial_t J_k(t,x) + \sup_{a_k\in\A} \Bigg\{ \Big[ p_k^I- && (1+c) \, \lambda_k \int_{\R_+} \!\!\! \left[ z_k - r(z_k,a_{k} )\right] F_k(dz_k)  \Big]\, \partial_x J_k(t,x)  
    \\
    &  && + \int_{\R_+} \!\! \left[J_k(t,x-r(z_{k},a_k)) - J_k(t,x)\right]\, \mu_{k,k}(dz_{k},dt)\\
    &  &&+ \int_{\Rpn} \! \left[J_k(t,x-r(z_k,a_k)) - J_k(t,x)\right]\, \nu_{k}(d\bz,dt) \Bigg\} ,
    \\
    J_k(T,x) &= -\tfrac{1}{\gamma_k}\,
    e^{-\gamma_k\,x}.
\end{alignat*}
Given the functional form of the terminal condition, we make the Ansatz
\begin{equation}\label{eq:ansatz}
    J_k(t,x) = -\tfrac{1}{\gamma_k}\,e^{-\gamma_k\,[f(t) + g(t) x]}\, ,
\end{equation}
for deterministic functions $f(t)$ and $g(t)$ satisfying terminal conditions $ f(T) = 0$ and $g(T) = 1$, respectively. Upon substituting \eqref{eq:ansatz} into the HJB equation, and after simplifying, we obtain
\begin{equation}
\begin{aligned}
\label{eq:insurerPDE}
    0 = \gamma_k \left[ f'(t) + g'(t) \, x \right] + \sup_{a_k\in\A} \Bigg\{  \gamma_k \,g(t)\Big[ & p_k^I - (1+c) \lambda_k \int_{\R_+} \!\!\! \left[ z_k - r(z_k,a_{k} )\right] F_k(dz_k)   \Big]
    \\
    &  -\lambda_k \int_{\R_+}  \!\!  \left[ e^{\gamma_k g(t) r(z_k,a_k)} - 1\right] F_k(dz_k)  \Bigg\} \, .
\end{aligned}
\end{equation}
As the above must hold for all $x$, it follows that $g(t)\equiv 1$, for all $t\in[0,T]$.
The first order conditions (in $a_k$) imply that the supremum satisfies \cref{eq:insurer_optim_alpha}. Furthermore the condition in  \cref{assump:convexinsurer} ensures that $(\alpha_{t,k}^*)_{t\in[0,T]}$ is indeed a local maximiser. 
\end{proof}

\cref{prop:insurer_optim} implies that the insurer reduces their demand for reinsurance (retains more of the aggregate loss) as the price increases.
\begin{corollary}
    The insurer's optimal response to the reinsurer's price, $\alpha_k^\dagger[c]$, is an increasing function of the price $c$.
\end{corollary}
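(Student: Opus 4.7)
My plan is to apply the implicit function theorem to the first-order condition \eqref{eq:insurer_optim_alpha}. Setting
\begin{equation*}
G(a,c) := \int_0^\infty \partial_a r(z,a)\left\{(1+c) - e^{\gamma_k r(z,a)}\right\}F_k(dz),
\end{equation*}
we have $G(\alpha_k^\dagger[c], c) = 0$, so the sign of $\frac{d\alpha_k^\dagger[c]}{dc}$ is determined by $-\partial_c G/\partial_a G$.

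First I would compute $\partial_c G(a,c) = \int_0^\infty \partial_a r(z,a)\,F_k(dz)$, which is strictly positive because $r$ is increasing in $a$ by \cref{assump:retentionfuncs}, so $\partial_a r \geq 0$ almost everywhere and is not identically zero. Next, $\partial_a G$ decomposes via the product rule into a term with $\partial_a^2 r \cdot \{(1+c) - e^{\gamma_k r}\}$, which is $\leq 0$ by hypothesis \eqref{assump:convexinsurer}, and a term $-\gamma_k (\partial_a r)^2 e^{\gamma_k r}$, which is strictly negative. Hence $\partial_a G < 0$, and the implicit function theorem yields $d\alpha_k^\dagger[c]/dc > 0$.

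The main obstacle is that the natural examples, such as excess-of-loss in \cref{ex:XL}, have $\partial_a r$ and $\partial_a^2 r$ that exist only in a distributional sense, so the classical implicit function theorem does not apply verbatim. A cleaner alternative avoiding any smoothness is to invoke monotone comparative statics: up to terms constant in $a$, the supremand in \eqref{eq:insurerPDE} with $g(t)\equiv 1$ is proportional to
\begin{equation*}
H(a,c) := (1+c)\int_0^\infty r(z,a)\,F_k(dz) - \tfrac{1}{\gamma_k}\int_0^\infty e^{\gamma_k r(z,a)}\,F_k(dz),
\end{equation*}
which has non-decreasing differences in $(a,c)$ because $\partial_c H(a,c) = \int_0^\infty r(z,a)\,F_k(dz)$ is non-decreasing in $a$ (since $r$ is increasing in $a$ by \cref{assump:retentionfuncs}). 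Topkis' monotonicity theorem then delivers that the maximiser $\alpha_k^\dagger[c]$ of $H(\cdot,c)$ is non-decreasing in $c$, which is the desired conclusion.
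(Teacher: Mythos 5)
Your first argument is exactly the paper's proof: implicit differentiation of the first-order condition \eqref{eq:insurer_optim_alpha}, with the numerator $\int_0^\infty \partial_a r(z,a)\,F_k(dz)\ge 0$ by monotonicity of $r$ in $a$ and the denominator positive because the $-\gamma_k(\partial_a r)^2 e^{\gamma_k r}$ term is strictly negative and the $\partial_a^2 r$ term has the right sign by \eqref{assump:convexinsurer}. Your Topkis alternative is a genuinely different route and is worth noting: it replaces differentiability of $c\mapsto\alpha_k^\dagger[c]$ and the pointwise second-derivative hypothesis with a single supermodularity observation, namely that $\partial_c H(a,c)=\int_0^\infty r(z,a)\,F_k(dz)$ is non-decreasing in $a$, which follows directly from \cref{assump:retentionfuncs}; this is cleaner for the piecewise-linear contracts of \cref{ex:XL,ex:XL_capped}, where $\partial_a r$ and $\partial_a^2 r$ exist only almost everywhere and the classical implicit function theorem is applied somewhat formally in the paper. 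The trade-offs are: the paper's computation yields a quantitative expression for $\partial\alpha_k/\partial c$ (in fact strict positivity whenever $\partial_a r$ is not $F_k$-a.e.\ zero), whereas Topkis only gives that the argmax correspondence is increasing in the strong set order, hence monotone selections; and, more substantively, Topkis speaks about \emph{global} maximisers of $H(\cdot,c)$, while $\alpha_k^\dagger[c]$ is defined as a root of the first-order condition that \eqref{assump:convexinsurer} certifies only as a \emph{local} maximiser. To make the second argument apply to the object in the statement you need the additional (true in all the paper's examples, but not assumed in general) identification of the FOC solution with the maximiser of $H(\cdot,c)$. With that caveat flagged, both routes are sound.
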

\begin{proof}
    Taking the derivative of \cref{eq:insurer_optim_alpha} gives that 
        \begin{align*}
        \frac{\partial \alpha_k}{\partial c} &= \frac{\displaystyle \int_{\R_+}  \!\! \partial_{a}r(z_k,a)F_k(dz_k)}{\displaystyle  \int_{\R_+}  \!\!\!  \left( \gamma_k \left( \partial_{a}r(z_k,a) \right)^2  e^{\gamma_k r(z_k,a)} - \partial_a^2 r(z_k,a)
        \left[ (1 + c) - e^{\gamma_k r(z_k,a)} \right]\right) F_k(dz_k)},
    \end{align*}
    which is positive by \cref{assump:convexinsurer} and the fact that the retention function $r$ is non-decreasing in $a$.
\end{proof}

Next, we examine the insurer's optimal response when the reinsurance contract is of a specific type. If the insurer chooses the retention level of an excess-of-loss reinsurance contract, then \cref{eq:insurer_optim_alpha} in \cref{prop:insurer_optim} simplifies to an explicit expression for $\alpha^*_k$. This is in contrast to proportional reinsurance, where this is not the case. 

When \cref{prop:insurer_optim} is applied to the specific cases of proportional reinsurance and excess-of-loss reinsurance, discussed next, we recover known results in the literature, such as those of \textcite{Li2017} and \textcite{Hu2018}.

\begin{example}[Proportional reinsurance]
In the case of proportional reinsurance (\cref{ex:prop}), the non-linear equation for $a\in\A$ in \cref{prop:insurer_optim} is
\begin{equation*}
    (1+c) \int_{\R_+}  \!\! z_k \, F_k(dz_k) = \int_{\R_+}  \!\!  z_k e^{\gamma_k a z_k} F_k(dz_k) \, .
\end{equation*}
This simplifies to
\begin{equation}\label{example:prop-reinsurance-insurer-control}
    (1+c) \, \E^{\P_k} \! \left[ Z \right] = \E^{\P_k} \! \left[ Z\, e^{\gamma_k a Z} \right]\, ,
\end{equation}
where $Z$ is a random variable with $\P_k$-distribution function given in \eqref{eqn:ins_cPois}.
If a solution $\alpha_k^\dagger[c]$ to \cref{example:prop-reinsurance-insurer-control} exists, then it is insurer-$k$'s optimal control. Furthermore, it is a global maximiser, since for proportional reinsurance, the expression in the curly braces in \cref{eq:insurerPDE} is strictly convex in $a$.
\end{example}

\begin{example}[Excess-of-loss reinsurance]
In the case of excess-of-loss reinsurance (\cref{ex:XL}), we have $\partial_{a}r(z,a) = \Id_{\{a \leq z\}}$, therefore
the condition in \cref{prop:insurer_optim} reduces to
\begin{equation*}
    (1+c) \int_{a}^\infty \! F_k(dz_k)  = \int_{a}^\infty  e^{\gamma_k  a} \,  F_k(dz_k) \, .
\end{equation*}
Rearranging gives
\begin{equation}
    \alpha_k^\dagger[c] = \tfrac{1}{\gamma_k}\log(1+c)\,.
\end{equation}

Furthermore, $\alpha_k^\dagger[c]$ is a global maximiser as the derivative of \eqref{eq:insurerPDE} with respect to $a_k$ is increasing on $(0, \alpha_k^\dagger[c])$, decreasing on $(\alpha_k^\dagger[c], \overline{\alpha})$, and constant on $(\overline{\alpha},\infty)$, where $\overline{\alpha}$ is the essential supremum of insurer-$k$'s claim severity distribution.
\end{example}

We note that when the insurer seeks to maximise their exponential utility over all possible retention functions (without fixing a specific contract type), the optimal contract form is an excess-of-loss contract \citep{Li2017}, so this contract form is of particular interest.

We further consider excess-of-loss reinsurance with a policy limit, and find that the limit does not affect the insurer's optimal choice of retention level. 

\begin{example}[Excess-of-loss reinsurance with a policy limit]
In the case of excess-of-loss reinsurance with a policy limit (\cref{ex:XL_capped}), a similar calculation to the previous example shows that
the condition in \cref{prop:insurer_optim} simplifies to
\begin{equation}
    \alpha_k^\dagger[c] = \tfrac{1}{\gamma_k}\log(1+c)\, ,
\end{equation}
and  $\alpha_k^\dagger[c]$ is a global maximiser.
\end{example}

We observe in all three examples that $\alpha_k^\dagger$ is an increasing function of $c$, the reinsurer's safety loading. Recall that $\alpha^\dagger_k$ is in \cref{ex:prop} the percentage of the loss the insurer retains, and in \cref{ex:XL,ex:XL_capped} the level at which the insurer purchases reinsurance. Thus $\alpha_k^\dagger$ being increasing in the reinsurer's safety loading implies that the insurer purchases less reinsurance if the safety loading increases.

\section{Reinsurer's strategy under ambiguity}
\label{sec:reins_prob}

Next, we consider the reinsurance problem from the perspective of the reinsurer. We assume that, unlike the insurer, the reinsurer does not have a utility but rather seeks to maximise their expected wealth. They are, however, ambiguity averse in the following sense: as their premia and losses depend on the insurers' loss models, the reinsurer must calculate reinsurance premia that take into account the different insurers' loss models, as well as uncertainty about their accuracy. Using the insurers' models $\{\P_k\}_{k \in \N}$ as reference measures, the reinsurer seeks to find a single probability measure to price reinsurance contracts and estimate losses.

Specifically, the reinsurer seeks to maximise their wealth under a single probability measure $\Q$ that accounts for the different insurers' loss models as well as uncertainty about their accuracy. The reinsurer chooses $\Q$ from the set of measures that are absolutely continuous with respect to the reference measures by choosing the compensators of the PRMs under $\Q$, satisfying the following conditions.

\begin{definition}[Reinsurer's compensator]
An admissible collection of compensators for the reinsurer is a collection of nonnegative, $\mathbb{F}$-predictable random fields 
$\boldsymbol{\sigma} = (\varsigma, \bvrho)$, where
$\varsigma=(\varsigma_t(\cdot))_{t \in [0,T]}$ and where $\bvrho=(\varrho_{t,1}(\cdot),\ldots,\varrho_{t,n}(\cdot))_{t \in [0,T]}$ such that, for all $t\in[0,T]$, $\varsigma_t(\cdot):\Rpn\to\R_+$ and $\varrho_{t,k}(\cdot):\R_+\to\R_+$, for $k \in \N$. Furthermore, assume
\begin{equation*}
    \E^{\P_j} \left[ \exp \left( \, \sum_{k\in\N}\int_0^T \!\! \int_{\R_+}   \left[ 1-\frac{\varrho_{t,k}(z_k)}{w_{k,j}(z_k)} \right]^2 M_{k}(dz_k, dt) + \int_0^T \!\! \int_{\Rpn}  \left[1- \frac{\varsigma_{t}(\bz)}{v_{j}(\bz)} \right]^2 N (d\bz, dt) \right) \right] < \infty , \quad \forall j \in \N \,,
\end{equation*}
where $w_{k,j}(z_k)$ is the density of insurer-$k$'s idiosyncratic loss PRM $\P_{j}$-compensator and $v_j(\bz)$ is the density of the systemic loss PRM's $\P_{j}$-compensator.
We denote by $\mathcal{V}$ the set of admissible collections $\boldsymbol{\sigma}$.
\end{definition}

Next, $\boldsymbol{\sigma} \in \V$ induces a probability measure $\Q^{\boldsymbol{\sigma}}$, which can be characterised by the Radon-Nikodym derivative from $\P_j$ to $\Q^{\boldsymbol{\sigma}}$, $j \in \N$, 
\begin{align*}
    \frac{d\Q^{\boldsymbol{\sigma}}}{d\P_j} &= \exp \Bigg( \, \sum_{k\in\N} \int_0^T \!\! \int_{\R_+} \log \left( \frac{\varrho_{t,k}(z_k)}{w_{k,j}(z_k)}\right) M_{k}(dz_k,dt) + \sum_{k\in\N} \int_0^T \!\!\int_{\R_+}  \left[1- \frac{\varrho_{t,k}(z_k)}{w_{k,j}(z_k)} \right] w_{k,j}(z_k) \,dz_k \,dt \\
    & \qquad\qquad + \int_0^T \!\! \int_{\Rpn} \log \left( \frac{\varsigma_{t}(\bz)}{v_{j}(\bz)}\right) N(d\bz,dt) +  \int_0^T \!\! \int_{\Rpn} \left[1- \frac{\varsigma_{t}(\bz)}{v_{j}(\bz)} \right] v_{j}(\bz) \,d\bz \,dt \Bigg) \, .
\end{align*}
Then by Girsanov's theorem (see, e.g., \textcite{appliedstochjump}), the systemic loss process $N(d\bz,dt)$ has $\Q^{\boldsymbol{\sigma}}$-compensator $\nu^{\Q^{\boldsymbol{\sigma}}}(d\bz,dt) := \vs_t(\bz) \, d\bz \, dt$ and the insurers' individual loss processes $M_k(dz_k,dt)$ have $\Q^{\boldsymbol{\sigma}}$-compensators $\mu_k^{\Q^{\boldsymbol{\sigma}}}(dz_k,dt) := \varrho_{t,k}(z_k) \, dz_k\, dt$. For future use, we define the set $\G$ as the set consisting of functions $g:[0,T] \times \Rpn \to\R_+$ such that $\int_{\Rpn} g(t,\bz)\,d\bz < \infty$ for every $t \in [0,T]$ and the set $\mathcal{H}$ as the set consisting of functions $h:[0,T]\times \R_+\to\R_+$ such that $\int_{\R_+} h(t,z)\,dz < \infty$ for every $t \in [0,T]$.

The reinsurer quantifies possible misspecification in $\P_k$, $k\in\N$, with relative entropy penalties. The relative entropy, or Kullback-Leibler (KL) divergence, from $\P_k$ to an absolutely continuous candidate probability measure $\Q^{\boldsymbol{\sigma}}$ is given by
\begin{equation}
    \KL{\Q^{\boldsymbol{\sigma}}}{\P_k} := \E^{\Q^{\boldsymbol{\sigma}}} \left[\log\left(\frac{d\Q^{\boldsymbol{\sigma}}}{d\P_k} \right) \right] \, ,
\end{equation}
where $\E^{\Q^{\boldsymbol{\sigma}}} [\cdot]$ denotes the expectation under $\Q^{\boldsymbol{\sigma}}$. The reinsurer then penalises a candidate probability measure $\Q^{\boldsymbol{\sigma}}$ by the weighted average of the KL divergences from the individual insurers' probability measures to the candidate one, that is,
\begin{equation}
    \sum_{k\in\N} \pi_k \, \KL{\Q^{\boldsymbol{\sigma}}}{\P_k} \, ,
\end{equation}
where $\pi_k$ are weights satisfying $\pi_k \ge 0$ and  $\sum_{k\in\N} \pi_k = 1$. Each weight $\pi_k$ represents the reinsurer's \textit{a priori} belief in insurer-$k$'s model for insurable losses. For example if $\pi_k = 0$, this means that the reinsurer disregards the model used by insurer-$k$. A naive way for the reinsurer to model the losses would be to use a weighted arithmetic mean of the insurers' compensators. We shall see, however, that the optimal behaviour of the reinsurer instead relates to the weighted geometric mean. Thus, for future reference we define the weighted arithmetic mean of the insurers' systemic loss compensators $v^a$ and the weighted geometric mean of the insurers' systemic loss compensators $v^g$ as
\begin{equation*}
v^a(\bz) := \sum_{j\in\N} \pi_j \vsmallskip v_{j}(\bz)\,,
\quad \text{and} \quad
v^g(\bz) := \prod_{j\in\N} v_{j}(\bz)^{\pi_j} \, ,
\end{equation*}
and similarly the weighted means of the idiosyncratic loss compensators as
\begin{equation*}
w^a_{k}(z_k) := \sum_{j\in\N} \pi_j \vsmallskip w_{k,j}(z_k)\,,
\quad \text{and} \quad
w^g_{k}(z) := \prod_{j\in\N} w_{k,j}(z)^{\pi_j} \,, \qquad \text{for } k \in \N  \,.\\
\end{equation*}

The reinsurer maximises their expected wealth, subject to a KL penalty, by choosing safety loadings $\boldeta$ and compensators $\boldsymbol{\sigma}$. As the reinsurer is the first-mover in the Stackelberg game, they are able to anticipate the insurers' responses to their prices.
\begin{optimization}
\label{reinsurersproblem}
Let  insurer-$k$'s demand for reinsurance be parameterised by $\alpha_k^\dagger[\eta_{t,k}]$. Then the reinsurer's problem is
\begin{equation}
    \label{eqn:P-R}
    \sup_{\eta\in\mfC} \inf_{\boldsymbol{\sigma} \in \V} \, \E^{\Q^{\boldsymbol{\sigma}}}  \left[Y_{T} + \frac{1}{\eps} \sum_{k\in\N} \pi_{k} \,\KL{\Q^{\boldsymbol{\sigma}}}{\P_k} \right],
    \tag{$P_R$}
\end{equation}
where $\eps \,>0$ is a parameter representing the reinsurer's overall ambiguity aversion and we recall that the reinsurer's wealth process $Y$ depends on $\eta$.
\end{optimization}

Incorporating ambiguity aversion into the reinsurer's view of the insurance market is different from assessing the robustness of the reinsurer's model, the latter of which could be achieved by stress testing, among other approaches. While stress testing aims at finding worst-case scenarios, ambiguity aversion penalises scenarios according to the reinsurer's point of view. We refer the reader to \textcite{Kroell2024} for stress testing in an insurance context.

The next statement provides the reinsurer's optimal compensators in feedback form for problem \eqref{eqn:P-R}. We show that the optimal safety loadings of the reinsurer are independent of time, which results in the insurers' demand for reinsurance being constant over time. Furthermore, the reinsurer's optimal 
compensators are time independent, and thus the optimal probability measure that the reinsurer uses to determine premia is constant over time. 

We further show that the optimal reinsurance compensators are proportional to the weighted geometric mean of the compensators of the insurers, i.e. $v^g$ and $w_k^g$. Thus, if $\pi_k$ is small, then insurer-$k$'s model only has a small effect on the reinsurer's model, as its weight when calculating both the systemic losses and the idiosyncratic losses is smaller. Furthermore, if $\pi_k = 0$, then the reinsurer's optimal probability measure is independent of insurer-$k$'s model, but still depends on insurer-$k$'s losses through the exponential distortions to both the systemic loss compensator and insurer-$k$'s idiosyncratic loss compensator.
Moreover, all the reinsurer's compensators scale exponentially with the ambiguity $\varepsilon$.

\begin{proposition}
\label{prop:reinsurer_optim}
For $\boldsymbol{c}:=(c_1, \ldots, c_n)\in\R^n$, define the following parameterised set of compensators: 
\begin{align*}
	\varrho_k^\dagger(z_k, c_k) &:=  w_k^g(z_k) \exp \left( \eps \, [z_k - r(z_k, \alpha_k^\dagger[c_k] )]\right) \,, \quad \forall k \in \N \,, \text{ and} \\
	\varsigma^\dagger(\bz,\boldsymbol{c}) 
    :&= 
    v^g(\bz)\; \exp \left\{ \eps \sum_{k\in\N} \left[ z_k - r(z_k,  \alpha_k^\dagger[c_k]) \right]  \right\} \,.
\end{align*}
Further, suppose that the system of non-linear equations in $\boldsymbol{c}$
     \begin{align*}
        1+c_k = &
        \frac{\displaystyle \int_{\R_+} \!\!\! \left(z_k - r(z_k,\alpha_k^\dagger[c_k] )\right) F_k(dz_k) }{ \displaystyle \int_{\R_+} \!\!\!  \partial_c \, r(z_k,\alpha_k^\dagger[c])\big|_{c=c_k}  F_k(dz_k) } \\
        &+ \frac{\displaystyle \int_{\R_+} \!\!\! \partial_c \, r(z_k,\alpha_k^\dagger[c])\big|_{c=c_k} \, \varrho_k^\dagger(z_k, c_k) \, dz_k
        +   \int_{\Rpn} \!\!\! \partial_c \, r(z_k,\alpha_k^\dagger[c])\big|_{c=c_k} \, \varsigma^\dagger(\bz,\boldsymbol{c}) \, d\bz}{ \displaystyle \lambda_k \int_{\R_+} \!\!\!  \partial_c \, r(z_k,\alpha_k^\dagger[c])\big|_{c=c_k} F_k(dz_k) }
        \,, \qquad \forall k\in\N
    \end{align*}
has a solution, which we denote by $\boldsymbol{\eta}^*$.
Then, the reinsurer's optimal compensators in feedback form for problem \eqref{eqn:P-R} are $\bvrho^*(\cdot) = (\varrho_{1}^\dagger(\cdot, \eta^*_1),\ldots,\varrho^\dagger_{n}(\cdot, \eta_n^*))$ and $\varsigma^*(\cdot):=\varsigma^\dagger(\cdot,\boldsymbol{\eta}^*)$. Note that the optimal compensators are independent of time, thus we simply write $\bvrho^* = \bvrho^*_t$, $\varsigma^* = \varsigma^*_t$, and $\boldsymbol{\sigma}^*  = ( \varsigma^*, \bvrho^*)$.
\end{proposition}

\begin{proof}
The reinsurer's value function is 
\begin{align*}
    \Phi(t,y) = \sup_{\eta\in\mfC} \inf_{\boldsymbol{\sigma} \in \mathcal{V}} \, \mathbb{E}^{\Q^{\boldsymbol{\sigma}}}_{t,y}  \Bigg[ Y_{T} + 
    \frac{1}{\eps}  \sum_{j \in\N} \pi_{j} &\Bigg(\, \sum_{k\in\N} \int_t^T \!\! \int_{\R_+} \!\!\! \left(\varrho_{u,k}(z_k) \left[ \log \left( \frac{\varrho_{u,k}(z_k)}{w_{k,j}(z_k)}\right) - 1\right] + w_{k,j} (z_k) \right) dz_k \,du \\
    & \quad + \int_t^T \!\! \int_{\Rpn}  \!\left( \varsigma_u(\bz) \left[ \log \left( \frac{\varsigma_u(\bz)}{v_{j}(\bz)}\right) - 1 \right] + v_{j}(\bz) \right)  d\bz\,  du  \Bigg) \Bigg]\,,
\end{align*}
where $\E^{\Q^{\boldsymbol{\sigma}}}_{t,y}$ denotes the conditional $\Q^{\boldsymbol{\sigma}}$-expectation given that the reinsurer's wealth process satisfies $Y_{t-}=y$. Using the notation of the arithmetic and geometric mean of the compensators $w_k$, $k\in\N$, and $v$, the value function may be written as
\begin{align*}
    \Phi(t,y) = \sup_{\eta\in\mfC} \inf_{\boldsymbol{\sigma} \in \mathcal{V}} \, \mathbb{E}^{\Q^{\boldsymbol{\sigma}}}_{t,y}  \Bigg[ Y_{T} + 
    \frac{1}{\eps} &\Bigg(\,\sum_{k\in\N} \int_t^T \!\! \int_{\R_+} \!\!\! \left(\varrho_{u,k}(z_k) \left[ \log \left( \frac{\varrho_{u,k}(z_k)}{w^g_{k}(z_k)}\right) - 1\right] + w^a_{k} (z_k) \right) dz_k \,du \\
    & \quad + \int_t^T \!\! \int_{\Rpn}  \!\left( \varsigma_u(\bz) \left[ \log \left( \frac{\varsigma_u(\bz)}{v^g(\bz)}\right) - 1 \right] + v^a(\bz) \right)  d\bz\,  du  \Bigg) \Bigg] \,.
\end{align*}

As $\eta_{k,t}$ are deterministic, the optimal $\varsigma_t(\bz)$ and $\varrho_{t,k}(z_k), \, k \in \N$ do not depend on the reinsurer's wealth and are deterministic fields, so we may denote them by $\varsigma(t,\bz)$ and $\varrho_{k}(t,z_k)$. In this case, the conditional expectation simplifies, and we have that
\begin{align*}
    \Phi(t,y) = \sup_{\eta\in\mfC} \inf_{\boldsymbol{\sigma} \in \bar{\mathcal{V}}} \, \Bigg\{ y &+ \int_t^T \sum_{k \in \N} p_k^R\left(\alpha^*_{u,k}, \eta_{u,k}\right) \, du  \\
    & + \frac{1}{\eps} \sum_{k \in \N} \int_t^T \!\! \int_{\R_+} \left( \varrho_{k}(u,z_k) \left[ \log \left( \frac{\ \varrho_{k}(u,z_k)}{w^g_k(z_k)}\right) - 1 -\eps(z_k-r(z_k,\alpha^*_{u,k}))\right] +   w^a_k(z_k) \right) dz_k \, du \\
    &+ \frac{1}{\eps} \int_t^T \!\! \int_{\Rpn} \left( \varsigma(u,\bz) \left[ \log \left( \frac{\varsigma(u,\bz)}{v^g(\bz)}\right) - 1 -\eps\sum_{k\in\N}(z_k - r(z_k,\alpha^*_{u,k}))\right] +   v^a(\bz) \right) d\bz \, du \Bigg\} \, ,
\end{align*}
where $\bar{\mathcal{V}}\subset \mathcal{V}$ is the subset of deterministic admissible fields.
By the dynamic programming principle, the value function $\Phi$ satisfies the Hamilton-Jacobi-Bellman-Isaacs (HJBI) equation:
 \begin{align*}
    0 = \partial_t \Phi(t,y) + \sup_{\boldsymbol{c}\in\R^n_+} \inf_{\substack{\psi \in \G, \\ \xi_1, \ldots, \xi_n \in \mathcal{H}}} \Bigg\{ & \sum_{k\in\N} p_k^R\left(\alpha^\dagger_k[c_k], c_k \right) \partial_y \Phi(t,y) \\
    & +  \sum_{k\in\N} \int_{\R_+}  \left[ \Phi\left(t,y- \big[z_{k}-r(z_{k},\alpha^\dagger_k[c_k])\big] \right) \,  - \Phi(t,y) \right] \xi_k(t,z_k) \, dz_k
    \\
    &+ \int_{\Rpn}  \left[ \Phi\left(t,y-\sum_{k\in\N} \big[z_k-r(z_k,\alpha^\dagger_k[c_k])\big] \right) \,  - \Phi(t,y) \right] \psi(t,\bz) \, d\bz 
    \\
    & + \frac{1}{\eps} \sum_{k\in\N} \int_{\R_+} \left( \xi_{k}(t,z_k) \left[ \log \left( \frac{\xi_{k}(t,z_k)}{w_{k}^g(z_k)}\right) - 1 \right] + w_{k}^a (z_k) \right)  dz_k \\
    & +  \frac{1}{\eps} \int_{\Rpn} \left( \psi(t,\bz) \left[ \log \left( \frac{\psi(t,\bz)}{v^g(\bz)}\right) - 1 \right] + v^a(\bz) \right)  d\bz
    \Bigg\} \, 
\end{align*}
with terminal condition $\Phi(T) =y$.
Since the value function is linear in $y$, we write $\Phi(t,y) = y + \phi(t)$, leading to the ordinary differential equation for $\phi$:
 \begin{align*}
    0 = \phi'(t) + \sup_{\boldsymbol{c}\in\R^n_+} \inf_{\substack{\psi \in \G, \\ \xi_1, \ldots, \xi_n \in \mathcal{H}}} \Bigg\{ & \sum_{k\in\N} p_k^R\left(\alpha^\dagger_k[c_k], c_k \right)
    - \sum_{k\in\N} \int_{\R_+} \!\!  \left[ z_{k}-r(z_{k},\alpha^\dagger_k[c_k]) \right] \xi_k(t,z_k) \, dz_k
    \\
    &- \int_{\Rpn} \sum_{k\in\N} \big[z_k-r(z_k,\alpha^\dagger_k[c_k])\big] \psi(t,\bz) \, d\bz 
    \\
    & + \frac{1}{\eps} \sum_{k\in\N} \int_{\R_+} \left(\xi_{k}(t,z_k) \left[ \log \left( \frac{\xi_{k}(t,z_k)}{w_{k}^g(z_k)}\right) - 1 \right] + w_{k}^a (z_k) \right)  dz_k \\
    & +  \frac{1}{\eps} \int_{\Rpn} \left( \psi(t,\bz) \left[ \log \left( \frac{\psi(t,\bz)}{v^g(\bz)}\right) - 1 \right] + v^a(\bz) \right)  d\bz
    \Bigg\} \, 
\end{align*}
with terminal condition $\phi(T) = 0$.

We first consider the inner optimisation and in a first step optimise over $\psi$, and in a second step over $\xi_1, \ldots, \xi_n$. The infimum problem, where we minimise the functional of $\psi$, is given by
\begin{equation*}
    \mathcal{L}[\psi] := \int_{\Rpn} \!\! \psi(t,\bz) \left[ \log \left( \frac{\psi(t,\bz)}{v^g(\bz)}\right) - 1 - \eps \, \sum_{k\in\N} \big[z_k-r(z_k,\alpha^\dagger_k[c_k])\big] \right] d\bz \, .
\end{equation*}
Let $\delta >0$, and $h$ be an arbitrary function such that $\delta\; h + \psi \in \G$. Applying a variational first order condition, and using $\log\left(\frac{\psi(t,\bz) + \delta\,  h(x)}{v^g(\bz)}\right) = \log\left(\frac{\psi(t,\bz)}{v^g(\bz)}\right) + o( \delta ) $, results in
\begin{align*}
    0 &= \lim_{\delta \to 0} \frac{\mathcal{L}[\psi + \delta \, h] - \mathcal{L}[\psi]  }{\delta} = \int_{\Rpn} h(\bz) \left[ \log \left( \frac{\psi(t,\bz)}{v^g(\bz)}\right) - \eps  \, \sum_{k\in\N} \big[z_k-r(z_k,\alpha^\dagger_k[c_k])\big]   \right] d\bz \, .
\end{align*}
As $h$ was arbitrary, this implies that 
\begin{equation*}
    0 = \left[ \log \left( \frac{\psi(t,\bz)}{v^g(\bz)}\right) - \eps  \, \sum_{k\in\N} \big[z_k-r(z_k,\alpha^\dagger_k[c_k])\big]   \right]\,, \, \quad \text{for all } \bz \in \Rpn \text{ and } t\in[0,T].
\end{equation*}
Solving the above equation for $\psi$ gives the optimal compensator in feedback form, i.e. $\vs^*$:
\begin{equation*}
    \varsigma^*_t(\bz, \boldsymbol{c} ) =  v^g(\bz) \, \exp \left\{ \eps \, \sum_{k\in\N} \big[z_k-r(z_k,\alpha^\dagger_k[c_k])\big] \right\} \, .
\end{equation*}

In the second step, we use a similar procedure to solve individually for each $\xi_k$, $k\in\N$, giving each optimal compensator in feedback form, i.e. $\varrho^*_k$:
\begin{equation*}
\label{eq:reinsurer_prob}
    \varrho^*_{t,k}(z_k,c_k) = w_k^g(z_k) \, \exp \left\{ \eps \, \big[z_k-r(z_{k},\alpha^\dagger_k[c_k])\big] \right\} \, .
\end{equation*}
Substituting this into the HBJI equation, we have that
\begin{equation*}
        0 = \phi'(t) + \sup_{\boldsymbol{c}\in\R^n_+} \Bigg\{  \sum_{k\in\N} p_k^R\left(\alpha^\dagger_{k}[c_k], c_k\right) 
        + \frac{1}{\eps} \sum_{k\in\N} \int_{\R_+}   \! \! \left[w_k^a(z_k) - \varrho^*_{t,k}(z_k, c_k )  \right]  dz_k
        + \frac{1}{\eps} \int_{\Rpn}  \! \left[v^a(\bz) - \varsigma^*_t(\bz, \boldsymbol{c} )  \right]  d\bz \Bigg\} \, .
\end{equation*}
The derivative of the term in curly braces with respect to $c_k$, for each $k\in\N$, is, recalling the expression for the reinsurance premium given in \eqref{eqn:reins_prem}, 
\begin{align*}
    \frac{\partial}{\partial c_k} \{ \cdots \} =& \lambda_k \left[
     \int_{\R_+} \!\!\! \left[ z_k - r(z_k,\alpha_k^\dagger[c_k] )\right] F_k(dz_k)
    - (1+c_k) \! \int_{\R_+} \!\!\! \partial_{c_k} r(z_k,\alpha_k^\dagger[c_k])F_k(dz_k) \right]
    \\
    & + \int_{\R_+}  \partial_{c_k}r(z_k,\alpha_k^\dagger[c_k]) \, \varrho_{t,k}^*(z_k, c_k) \,dz_k
    + \int_{\Rpn}
    \partial_{c_k}r(z_k,\alpha_k^\dagger[c_k]) \,\varsigma^*_t(\bz, \boldsymbol{c}) \,d\bz\,.
\end{align*}
The first order conditions require setting this expression to $0$ for all $k\in\N$ and all $t\in[0,T]$. Hence, we obtain that the optimal $\eta_{k,t}^*$ satisfies the non-linear equation
\begin{align*}
    \eta_{k,t}^* &= 
    \frac{\displaystyle\int_{\R_+} \!\!\! \left(z_k-r(z_k,\alpha_k^\dagger[\eta_{k,t}^*])\right) F_k(dz_k)}{ \displaystyle\int_{\R_+}  \!\!  \partial_c \, r(z_k,\alpha_k^\dagger[c])\big|_{c=\eta_{k,t}^*} \,F_k(dz_k)} \\
    & \quad + \frac{\displaystyle \int_{\R_+} \!\!  \partial_a \, r(z_{k},a)\big|_{a=\alpha_k^\dagger[\eta_{k,t}^*]} \, \varrho^*_{t,k}(z_k,\eta_{k,t}^*) dz_{k}  + \int_{\Rpn} \partial_a \, r(z_k,a)\big|_{a=\alpha_k^\dagger[\eta_{k,t}^*]} \,  \varsigma^*_t(\bz,\boldeta^*) \, d\bz}{\lambda_k \displaystyle\int_{\R_+}  \!\!  \partial_a \, r(z_k,a)\big|_{a=\alpha^\dagger[\eta_{k,t}^*]} \,F_k(dz_k)} -1   \,. 
\end{align*}
As there is no explicit time dependence in the above equation, the roots of these non-linear equations are constant over time. 
\end{proof}

The following theorem gives the Stackelberg equilibrium for the insurers-reinsurer game. The proof follows by combining \cref{prop:insurer_optim}, which gives the optimal behaviour of the insurers, and \cref{prop:reinsurer_optim}, which gives the optimal behaviour of the reinsurer.

\begin{theorem}
\label{thm:stackelberg_eq}
Under the assumptions of \cref{prop:insurer_optim} and \cref{prop:reinsurer_optim}, the Stackelberg equilibrium is $(\boldalpha^*, \boldeta^*, \varsigma^*, \boldsymbol{\varrho}^*)$, where $\boldalpha^*:= (\alpha_1^*, \ldots, \alpha_n^*)$, with $\alpha_k^*:= \alpha^\dagger_k[\eta_k^*]$ for all $k\in\N$, $\boldsymbol{\varrho}^* = (\varrho_1^*(\cdot), \ldots, \varrho_n^*(\cdot))$ where for each $k \in \N$
\begin{equation*}
    \varrho^*_{k}(z_k) = w_k^g(z_k) \, \exp \left\{ \eps \, \big[z_k-r(z_{k},\alpha^*_k)\big] \right\} \, ,
\end{equation*}
and
\begin{equation*}
    \varsigma^*(\bz) = v^g(\bz) \, \exp \left\{ \eps \, \sum_{k\in\N} \big[z_k-r(z_k,\alpha^*_k)\big] \right\} \, .
\end{equation*}
The $\boldeta^* = (\eta_1^*, \ldots, \eta_n^*)$ satisfy the system of equations
\begin{align*}
        0 &=\int_{\R_+}  \!\! \partial_a \, r(z_k,a)\big|_{a=\alpha^\dagger[\eta_k^*]} \left\{(1+\eta_k^*)-e^{\gamma_k r(z_k,\alpha^\dagger[\eta_k^*])}\right\} F_k(dz_k)  \,, \quad \forall k\in\N \, ,  \\
        1+\eta_{k}^* &= 
    \frac{\displaystyle\int_{\R_+} \!\!\! \left(z_k-r(z_k,\alpha_k^\dagger[\eta_{k}^*])\right) F_k(dz_k) }{ \displaystyle\int_{\R_+} \!\!  \partial_c \, r(z_k,\alpha_k^\dagger[c])\big|_{c=\eta_{k}^*} \,F_k(dz_k) } \\
    & \quad + \frac{\displaystyle \int_{\R_+} \!\!  \partial_a \, r(z_{k},a)\big|_{a=\alpha_k^\dagger[\eta_k^*]} \, \varrho^*_{k}(z_k) \, dz_{k}  +  \int_{\Rpn} \partial_a \, r(z_k,a)\big|_{a=\alpha_k^\dagger[\eta_k^*]} \,  \varsigma^*(\bz) \, d\bz}{ \displaystyle \lambda_k \int_{\R_+} \!\!  \partial_a \, r(z_k,a)\big|_{a=\alpha^\dagger[\eta_k^*]} \,F_k(dz_k) }
    \,, \quad \forall k\in\N   \,. 
\end{align*}
\end{theorem}

One observes that the reinsurer's optimal compensators are exponential distortions of the weighted geometric mean of the insurers' compensators. In particular, the optimal compensator for insurer-$k$'s idiosyncratic loss process is the weighted geometric mean of the compensators for that loss process under all models, distorted by the losses insurer-$k$ cedes to the reinsurer. The optimal compensator for the systemic loss is the weighted geometric mean of the compensators of the systemic loss under all models, distorted by the sum of all losses ceded to the reinsurer.

For small ambiguity-aversion of the reinsurer, $\eps \to 0$, each compensator approaches the weighted geometric mean. We see that the Stackelberg equilibrium $(\boldalpha^*, \boldeta^*, \varsigma^*,\boldsymbol{\varrho}^*)$ is independent of time, as the premia, the reinsurance demand, and the reinsurance optimal probability measure are all constant over time. 
Note that in this general case, the optimal insurance demand $\boldalpha^*$ is a function of the optimal reinsurance safety loadings $\boldeta^*$ and the optimal reinsurance compensators $\varsigma^*$, $\bvrho^*$. Thus, the amount of reinsurance each insurer buys depends on the reinsurer's ambiguity aversion $\eps$ and the reinsurer's belief in each of the insurer's loss models, i.e. on $\pi_k$, $k \in \N$. 

The connection between $\boldalpha^*$, $\boldeta^*$, $\varsigma^*$, and $\bvrho^*$ is difficult to disentangle for general reinsurance contracts. However, for concrete reinsurance contracts, e.g., proportional and excess-of-loss, and a market with two insurers, we can illustrate the Stackelberg equilibrium, as we do in the next section.

We conclude this section with a discussion on the differences between our approach and an alternate approach where the reinsurer solves separate Stackelberg games with each insurer. 

\begin{remark}[Comparison with separate games]
\label{remark:comparison}
    It is natural to compare our equilibrium (the single game) with one where the reinsurer plays $n$ separate Stackelberg games, one with each insurer. Note that the equilibrium for the game with only a single insurer is given by taking $n=1$ in \cref{thm:stackelberg_eq}.

    Comparing the two approaches, we see that in the single game setting, the reinsurer uses all available information to create a single model for the insurable losses, and uses that model to price reinsurance contracts. For example, the model for the systemic loss, $\vs^*(z)$, depends on all $n$ models for the systemic loss, as well as all possible losses stemming from all reinsurance contracts. This model is then used to determine all reinsurance prices, as given by the safety loading, $\eta^*_k$, $k \in \N$.

    In contrast, if the reinsurer solves $n$ separate games, they only use a single input model $\P_k$ when pricing the losses for insurer-$k$. The resulting optimal pricing model for insurer-1 is $\vs_1^*(z) = v_1(z) \exp \left(\eps \big[z_1-r(z_1,\alpha^*_1)\big] \right)$, and similar for the other insurers.  In other words, the reinsurer determines the price for insurer-1 ignoring the information about the loss distribution they may have from insurers-$2, \ldots, n$. The resulting optimal pricing model is then different for each insurer, so the reinsurer is using different models to price each contract. Furthermore, as the reinsurer prices each contract separately, they ignore the fact that they are selling reinsurance on some losses to multiple customers, which is captured by the term $\sum_{k\in\N} \big[z_k-r(z_k,\alpha^*_k)\big]$ in the optimal model in the single game. Therefore in the single game setting, the reinsurer is not appropriately accounting for the risk in the case where the losses are highly correlated and on the other hand, is ignoring the possible benefits of diversification.

    In the single game approach, the reinsurer is required to develop a joint model for the systemic losses, which could entail more model risk. However, this is addressed by the fact that the single game approach incorporates multiple models, which mitigates the effect of inaccuracies in any one model.
\end{remark}

Thus, while the reinsurer could choose to play separate Stackelberg games with each insurer, in doing so they would ignore relevant information about insurable losses when creating their pricing model. They would also be unaware of possible gains they could make from diversification or possible losses they could be exposed to through risk concentration.

\section{Application to reinsurance contract types}
\label{sec:application}

In this section, we calculate and illustrate the Stackelberg equilibrium for specific reinsurance contracts. In particular, we consider proportional reinsurance and excess-of-loss reinsurance, with and without a policy limit. We further illustrate the solutions numerically in the case of two insurers.

\subsection{Proportional reinsurance}
First, we explicitly calculate the optimal proportional reinsurance contract. Recall that for this type of contract, the insurer decides a proportion of the loss to retain, $\alpha_k \in (0,1]$, while the reinsurer covers the remaining proportion, $(1-\alpha_k)$.

\begin{proposition}[Proportional reinsurance]
    Let $\boldalpha^*$ denote the solution of the following system of equations for $k \in \N$:
    \begin{equation*}
        0 = \,\int_{\R_+} \!\! z_k \, e^{\gamma_k \alpha^*_k z}\, \left[(1-\alpha^*_k)\, \gamma_k \, z_k - 1 \right] \lambda_k F_k(dz_k) 
    + \int_{\R_+} \!\! z_k \, \varrho_k^*(z_k)  \, dz_k
    + \int_{\Rpn} \!\! z_k \, \varsigma^*(\bz)  \, d\bz\,,
    \end{equation*}
    where $\varrho_k^*$, $k \in \N$, and $\varsigma^*$ are given in \Cref{thm:stackelberg_eq} with $z-r(z,\alpha) = (1-\alpha) z$.

    If such $\boldalpha^*$ exist, then the equilibrium safety loadings are
    \begin{equation*}
        \eta_k^* = \frac{\int_{\R_+} \!\! z_k \, e^{\gamma_k \alpha^*_k z}\, F_k(dz_k) }{\int_{\R_+} \!\! z_k \, F_k(dz_k) } - 1\,, \quad \forall k\in\N \, .
    \end{equation*}
    The Stackelberg equilibrium for proportional reinsurance is $(\boldalpha^*$, $\boldeta^*, \varsigma^*, \bvrho^*)$.
\end{proposition}

To further analyse the Stackelberg equilibrium in the proportional reinsurance case, we assume that there are two insurers whose losses are Gamma distributed with different parameters. We will first examine the simplified setting where the systemic losses are comonotonic and there are no idiosyncratic losses, before considering a more complex example. The assumption of comonotonic systemic losses means that the model for the systemic loss is a function of a single scalar variable.

\begin{example}
\label{ex:prop_Gamma}
    Suppose there are two insurers and the systemic losses are comonotonic and identically marginally Gamma distributed under both insurers' models. However, the insurers disagree on the model parameters. The marginal losses are Gamma distributed with shape parameters $m_1,m_2>0$, scale parameters $\xi_1, \xi_2>0$, and intensities $\lambda_1, \lambda_2 > 0$, under insurer-1 and insurer-2's models, respectively.
    Let $\xi_k \leq \min \{\frac{1}{\gamma_k},\,\frac{1}{2\eps} \}$ for $k=1,2$ and define
    \begin{align*}
        \tilde m &:=  m_1\pi_1  + \pi_2 m_2 \, , \\
        \tilde \xi &:= \frac{\xi_1 \xi_2}{\pi_1 \xi_2 + \pi_2 \xi_1 - \eps \, \xi_1 \xi_2 (2-\alpha^*_1 - \alpha^*_2)} \, .
    \end{align*}
    Then, at the Stackelberg equilibrium, the reinsurer's severity distribution is a Gamma distribution with shape $\tilde m$ and scale $\tilde \xi$. This means that the reinsurer calculates premia as if both insurers have a Gamma distribution with parameters $\tilde m$ and $\tilde \xi$ and constant intensity 
    \begin{equation*}
        \Lambda :=   \tilde \xi^{\tilde m} \, \Gamma(\tilde m)  \left( \frac{\lambda_1}{\Gamma(m_1) \xi_1^{m_1}} \right)^{\pi_1} \!\! \left( \frac{\lambda_2}{\Gamma(m_2) \xi_2^{m_2}} \right)^{\pi_2} \, ,
    \end{equation*}
    where $\Gamma(\cdot)$ is the Gamma function. The assumption $\xi_k \leq \min \{\frac{1}{\gamma_k},\,\frac{1}{2\eps} \}$ implies that the insurer only underwrites losses that are in line with their absolute risk aversion. Recall that the mean of a Gamma random variable is $m_k \xi_k$, thus insurer-$k$ only considers risks with mean $m_k \xi_k \le \frac{m_k}{\gamma_k}$. Similarly, the reinsurer requires that losses be in line with their ambiguity aversion, for example, they require that the means of each of the insurers' losses be bounded proportionally to the reciprocal of their ambiguity, i.e. $m_1\xi_1 + m_2 \xi_2 \le \frac{m_1 + m_2}{2 \eps}$.

    The insurers choose to reinsure proportions $\alpha_k^*$ of their losses, where $\alpha^*_k$, $k=1,2$, is the solution of the following system of equations:
     \begin{equation*}
         0 = \frac{-1}{(1 - \alpha^*_k \,\gamma_k \, \xi_k)^{m_k + 1}} 
         +   \frac{\gamma_k (1+m_k) \, \xi_k \, (1-\alpha^*_k)}{(1-\alpha_k^* \, \gamma_k\, \xi_k)^{m_k + 2}} 
         +  \frac{\tilde m\,\tilde\xi \, \Lambda}{m_k\,\xi_k \, \lambda_k} \, , \quad k = 1,2 \, .
     \end{equation*}
    The reinsurer charges safety loadings
    \begin{equation*}
        \eta^*_k = \frac{1}{(1-\alpha^*_k \vsmallskip \gamma_k \, \xi_k)^{m_k + 1}} -1 \, , \quad k = 1,2 \, .
    \end{equation*}

  In the numerical example, the insurers'  shape and scale parameters and intensities are $m_1 = 1.5, \, \xi_1 = 1, \, \lambda_1 = 2$ and $m_2 = 2, \, \xi_2 =1.25, \, \lambda_2 = 2.5$, respectively.   In particular, insurer-2 has greater expected losses than insurer-1, as it has on average more frequent claims, and the mean of their claim distribution is larger. The insurers have the same risk aversion, i.e. $\gamma_1=\gamma_2=0.5$. 

  \begin{figure}[hbt!]
    \centering
    \subfloat[Reinsurer's severity distribution for different $\varepsilon$ and $\pi_1 = \pi_2 = 0.5$]{\scalebox{0.85}{\includegraphics{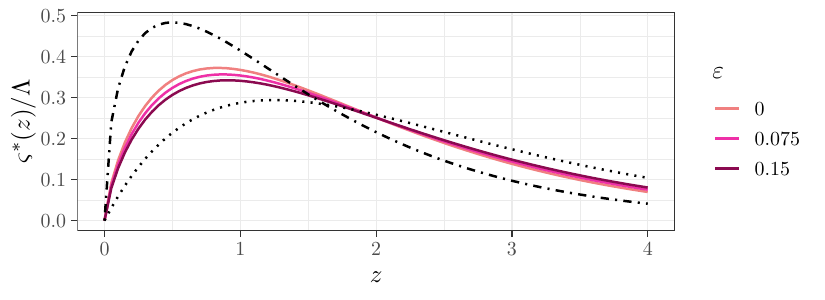}} \label{fig:1d_comp_prop}}
    \hfill
    \subfloat[Reinsurer's severity distribution for different $\pi_2$ and $\eps = 0.1$]{\scalebox{0.85}{\includegraphics{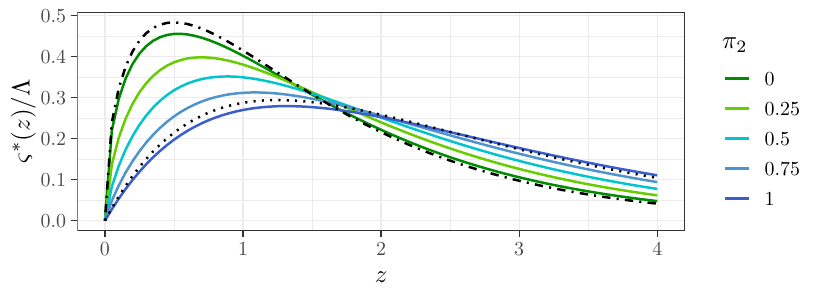}}}
    \caption{The severity distributions of the systemic loss under of insurer-1's model (dot-dashed line), insurer-2's model (dotted line), and the reinsurer's model (solid lines). The insurers have risk aversion parameters $\gamma_1=\gamma_2=0.5$. The systemic losses are comonotonic under both models, with identical Gamma distributed marginals. Under models 1 and 2, respectively, the intensities are $\lambda_1 = 2,  \, \lambda_2 = 2.5$, the scale parameters are $\xi_1 = 1, \, \xi_2 =1.25$ and the shape parameters are $m_1 = 1.5, \, m_2 = 2$. In panel (a), $\eps$ is varied while the weights are fixed at $\pi_1 = \pi_2 =0.5$. In panel (b), $\eps = 0.1$ and the weights are varied.}
    \label{fig:compensator_Gamma}
  \end{figure}

  \begin{figure}[hbt!]
    \centering
    \subfloat[Reinsurer's total intensity for different $\varepsilon$]{\scalebox{0.9}{\includegraphics{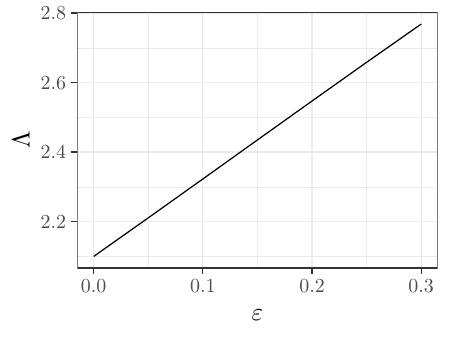}}}
    \hspace{1em}
    \subfloat[Reinsurer's total intensity for different $\pi_2$]{\scalebox{0.9}{\includegraphics{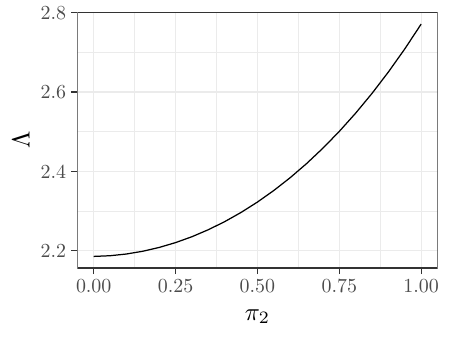}}}
    \caption{The total intensity, $\Lambda$, of the reinsurer.  In panel (a), the reinsurer's ambiguity aversion parameter, $\eps$, is varied while the weights are fixed at $\pi_1 = \pi_2 =0.5$. In panel (b), $\eps = 0.1$ and the weights applied to each insurer's loss model are varied. The insurers have risk aversion parameters $\gamma_1=\gamma_2=0.5$. The systemic losses are comonotonic under both models, with identical Gamma distributed marginals. Under models 1 and 2, respectively, the intensities are $\lambda_1 = 2,  \, \lambda_2 = 2.5$, the scale parameters are $\xi_1 = 1, \, \xi_2 =1.25$, and the shape parameters are $m_1 = 1.5, \, m_2 = 2$.}
    \label{fig:Lambda_Gamma}
  \end{figure}

  \cref{fig:compensator_Gamma} plots the optimal severity distributions (normalised compensators) under insurer-1's model (dot-dashed line), insurer-2's model (dotted line), and the reinsurer's model (solid lines). Panel (a) shows the effect of the ambiguity aversion $\eps$ on the severity distribution. Here, the weights applied to the insurers' loss models are kept fixed at $\pi_1 = \pi_2 = 0.5$. The salmon-coloured line gives the reinsurer's severity distribution when the reinsurer is not ambiguity-averse ($\eps=0$). In this case, the reinsurer's severity distribution is the normalised geometric mean of the insurers' distributions. As the reinsurer's ambiguity aversion increases, they place more weight on larger losses. This can be seen in the figure in the pink and purple lines ($\eps = 0.075, \, 0.15$). Panel (b) shows the effect of the choice of the weights, $\pi_1$ and $\pi_2$, on the severity distribution, when $\eps$ is kept fixed at 0.1. As the weights vary, the reinsurer interpolates between the two insurers' severity distributions; however, they give a larger weight to large losses $z$ due to their ambiguity aversion.

  \cref{fig:Lambda_Gamma} shows the reinsurer's total constant intensity, $\Lambda$, as a function of the reinsurer's ambiguity aversion, $\eps$, (panel (a)) and the weights applied to insurer-2's loss model, $\pi_2$ (panel (b)). Panel (a) shows that as the reinsurer becomes more ambiguity averse (increasing $\eps$), the rate of arrival of claims increases under their optimal model. Panel (b) shows that the total intensity is also increasing as the reinsurer puts more weight on the model with greater expected losses.

    \begin{figure}[b!]
    \centering
    \subfloat[Reinsurer's severity distribution, separate pricing approach]{\scalebox{0.85}{\includegraphics{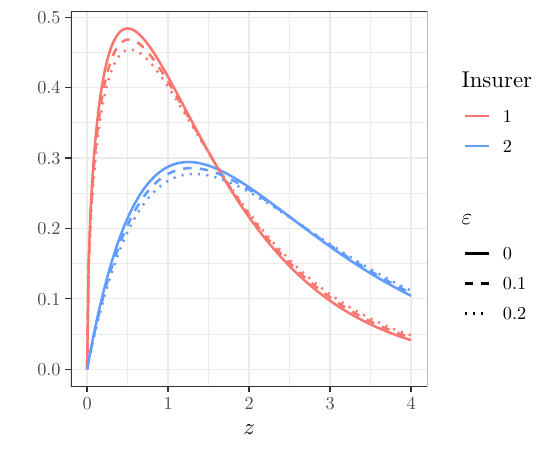}} \label{fig:seperate_games_a}}
    \hfill
    \subfloat[Reinsurer's severity distribution, shared pricing approach]{\scalebox{0.85}{\includegraphics{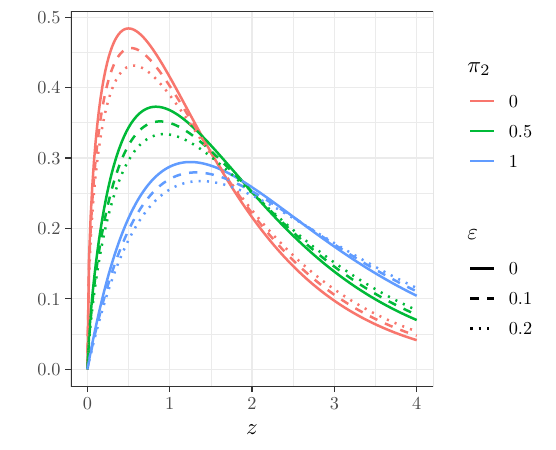}}}
    \caption{The severity distributions of the systemic loss for different values of the reinsurer's ambiguity aversion, $\eps$, under the assumption of (a) separate Stackelberg games played between the reinsurer and each insurer and (b) a single Stackelberg game played between the reinsurer and both insurers. The insurers have risk aversion parameters $\gamma_1=\gamma_2=0.5$. The systemic losses are comonotonic under both models, with identical Gamma distributed marginals. Under models 1 and 2, respectively, the intensities are $\lambda_1 = 2,  \, \lambda_2 = 2.5$, the scale parameters are $\xi_1 = 1, \, \xi_2 =1.25$, and the shape parameters are $m_1 = 1.5, \, m_2 = 2$.}
    \label{fig:separate_games}
  \end{figure}

  We can use this simplified setting to examine the effect of the reinsurer treating the reinsurance pricing with ambiguity problem as a single Stackelberg game, as opposed to two separate games, as discussed in \cref{remark:comparison}. Assuming the reinsurer instead solves separate games with insurer-1 and insurer-2, they then use the following models: for insurer-1, a Gamma distribution with shape parameter $m_1$ and scale parameter $\xi_1 / (1- \eps_1 \xi_1 (1-\alpha_1^*))$ and intensity $\lambda_1/(1- \eps_1 \xi_1 (1-\alpha_1^*))^{m_1}$; for insurer-2, a Gamma distribution with shape parameter $m_2$ and scale parameter $\xi_2 / (1- \eps_2 \xi_2 (1-\alpha_2^*))$ and intensity $\lambda_2/(1- \eps_2 \xi_2 (1-\alpha_2^*))^{m_2}$. 

  Thus when pricing reinsurance for insurer-1, the reinsurer's model for insurer-1 depends only on insurer-1's model and the reinsurance demand of insurer-1. The reinsurer completely disregards insurer-2's model when pricing insurer-1's contract, which has the same effect as placing no weight on it ($\pi_2=0$). Furthermore, the reinsurer's model also does not account for the reinsurance it sells to insurer-2 (effectively setting $\alpha_2^*=0$). A similar statement holds for insurer-2.

  \cref{fig:separate_games} illustrates these effects.  Panel (a) shows the model used for the systemic losses when the reinsurer uses the separate pricing approach, i.e., plays separate games with each insurer, while panel (b) shows the model used for the systemic losses when the reinsurer uses the shared pricing approach, i.e., plays a single game with both insurers, for different weights placed on the two distributions. We observe that in the shared pricing approach, the reinsurer can choose how to weigh the two input models, varying from only using insurer-1's model (red lines), only using insurer-2's model (blue lines), to using a combination of both models (green lines). In fact, there is a continuum of possible models, depending on the choice of the weight $\pi_2$. In contrast, in the separate pricing approach, the reinsurer must ignore the other model when pricing one insurer.

  Furthermore, even when the reinsurer disregards one of the two models in the single game ($\pi_2=0$ to ignore model 2 or $\pi_2=1$ to ignore model 1), there is still a difference compared to when they use the separate pricing approach, as long as the reinsurer is ambiguity averse ($\eps >0$). Examining the dashed and dotted lines ($\eps=0.1, 0.2$) in both panels for the corresponding red and blue curves, we see that the effect of the ambiguity aversion is more pronounced in the shared pricing approach (panel (b)). This is because the reinsurer accounts for the fact that two reinsurance contracts have been written on the loss when playing the single game, rather than just one. 
  
  This fact is further emphasised in \cref{tab:eta-compare}, which compares the optimal safety loadings $\eta_k$, $k=1,2$, in the separate pricing approach and the shared pricing approach, when the reinsurer ignores the other insurer's model. We observe that the safety loadings are the same only when there is no ambiguity ($\eps=0$), as expected. When there is ambiguity ($\eps>0$), the safety loadings charged in the shared pricing approach are greater, accounting for the fact that two reinsurance contracts have been written. In sum, by setting all reinsurance prices at once, the reinsurer uses all available information about the loss distributions and total reinsurance demand, allowing them to more appropriately price the contracts.

  \begin{table}[ht]
    \centering
    \begin{tabular}{l@{\hskip 0.4in} c c@{\hskip 0.3in} cc}
    & \multicolumn{2}{p{0.2\textwidth}}{\centering $\eta_1^*$} & \multicolumn{2}{p{0.2\textwidth}}{\centering$\eta_2^*$} \\
    \toprule
    & separate & shared ($\pi_2=0$) & separate & shared ($\pi_2=1$) \\
    \midrule
    $\eps = 0$ & 1.726 & 1.726 & 5.926 &  5.926 \\ 
    $\eps = 0.1$ & 1.802 & 1.866 &6.033 & 6.117 \\ 
    $\eps = 0.2$ & 1.878 & 2.012 & 6.148 & 6.305 \\
    \bottomrule
    \end{tabular}
    \caption{Comparison of optimal safety loadings $\eta_1^*$ and $\eta^*_2$ under the separate and shared pricing assumptions. The insurers have risk aversion parameters $\gamma_1=\gamma_2=0.5$. The systemic losses are comonotonic under both models, with identical Gamma distributed marginals. Under models 1 and 2, respectively, the intensities are $\lambda_1 = 2,  \, \lambda_2 = 2.5$, the scale parameters are $\xi_1 = 1, \, \xi_2 =1.25$, and the shape parameters are $m_1 = 1.5, \, m_2 = 2$.}
    \label{tab:eta-compare}
    \end{table}

\end{example}

We note that one can extend this example to $n$ insurers, as long as $\xi_k \leq \min \{\frac{1}{\gamma_k},\,\frac{1}{n\eps} \}$ for $k\in\N$. In this case one solves a similar $n$-dimensional system of equations for $\boldalpha$, the reinsurer's severity distribution remains Gamma distributed, and the equations to obtain $\eta_k^*$ from $\alpha_k^*$ remains the same. The constraints on the scale parameters $\xi_k$ mean that the insurers' loss models must be sufficiently light-tailed that they offer insurance according to their own utility preferences as parameterised by $\gamma_k$, and sufficiently light-tailed that the reinsurer offers reinsurance according to their ambiguity aversion, $\eps$.

Next we consider a more complex model, where there are idiosyncratic losses in addition to the systemic losses. Furthermore, the copula of the systemic losses is a Gumbel copula under both models.

\begin{example}
\label{ex:prop_Gumbel}
Suppose there are two insurers and that, under both insurers' models, the systemic losses are marginally identically Gamma distributed with shape parameter $m^S$ and scale parameter $\xi^S$ and with a dependence structure given by a Gumbel copula with parameter $\theta = 10/6$. This means the systemic losses of insurer-1 and insurer-2 are upper-tail dependent and have a Kendall's $\tau$ coefficient of 0.4 under both models. We further assume that under both models each insurer's idiosyncratic losses are identically Gamma distributed with shape parameter $m^I$ and scale parameter $\xi^I$. The parameters used in the numerical illustration are given in \cref{tab:prop_Gumbel_params}. We note that insurer-2's model has greater expected losses.

    \begin{table}[b!]
        \centering
        \begin{tabular}{l @{\hskip 0.4in} l l l l l l l}
            Parameter & $\theta$ & $\lambda^S$ & $m^S$ & $\xi^S$ & $\lambda^I$ & $m^I$ & $\xi^I$\\
            \toprule
            Insurer-1's model & 1.67 & 2 & 1.5 & 1 & 1.67 & 1.25 & 1\\
            Insurer-2's model & 1.67 & 2.5 & 2 & 1.25 & 2 & 1.5 & 1\\
        \end{tabular}
        \caption{Parameters for the numerical illustration in \cref{ex:prop_Gumbel}. The systemic losses have a Gumbel copula with parameter $\theta$ and are marginally Gamma distributed with shape parameter $m^S$ and scale parameter $\xi^S$. The idiosyncratic losses are Gamma distributed with shape parameter $m^I$ and scale parameter $\xi^I$.}
        \label{tab:prop_Gumbel_params}
    \end{table}
    
    \cref{fig:prop_2ins} shows the proportion of each loss retained by each insurer at equilibrium, $\alpha_k^*$, and the safety loading charged to them, $\eta_k^*$, as a function of $\pi_2$, the weight applied by the reinsurer to insurer-2's loss model. Three different values of the ambiguity aversion parameter, $\eps=0,\,0.05, \, 0.1$, are shown. Consider first the solid line, $\eps=0$, where the reinsurer has no ambiguity aversion. We observe that as $\pi_2$ increases, meaning that the reinsurer puts more weight onto the model with greater expected losses, the safety loadings, $\eta^*_k$, for both insurers are increasing. The amount retained by each insurer, $\alpha^*_k$, increases as well. 

    \begin{figure}[hbt!]
        \centering
        \scalebox{0.97}{\includegraphics{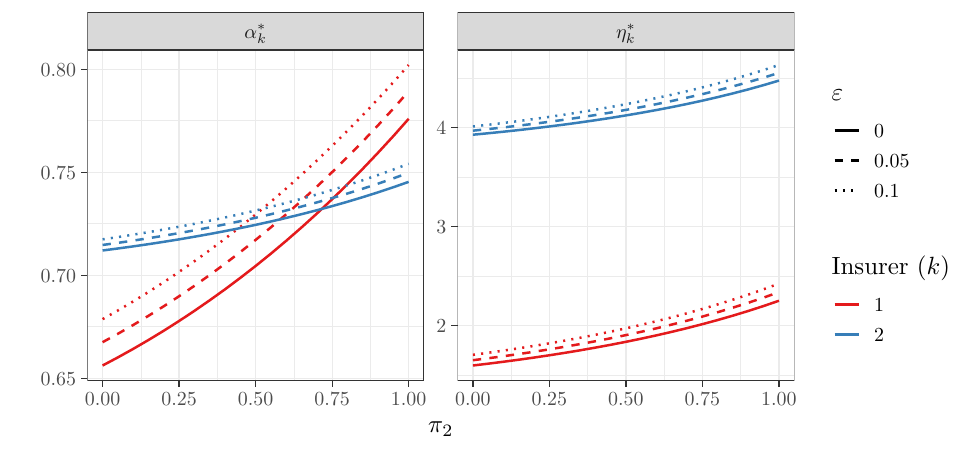}}
        \caption{The equilibrium proportion of losses each insurer retains, $\alpha_k^*$, and the equilibrium safety loadings charged to insurer-$k$, $\eta_k^*$, $k=1,2$, as a function of the weight the reinsurer places on insurer-2's loss model, $\pi_2$. The insurers have risk aversion parameters $\gamma_1=\gamma_2=0.5$ and their severity distributions are Gamma distributed with parameters given in \cref{tab:prop_Gumbel_params}.}
        \label{fig:prop_2ins}
    \end{figure}
    
    In particular, we see that when both loss models are used to calculate the insurance premium ($\pi_2 \in (0,1)$), insurer-1 (red lines) pays more than when only their loss model is used ($\pi_2=0$). Meanwhile, insurer-2 (blue lines) benefits from the use of both models, paying less when $\pi_2 \in (0,1)$ than when $\pi_2 =1$. We further observe that insurer-1 is more sensitive to the pricing model than insurer-2. Insurer-1 buys reinsurance for 34\% of the loss when $\pi_2 = 0$ and reduces their reinsurance demand to 22\% for $\pi_2 = 1$, whereas insurer-2 only decreases their demand for reinsurance from 29\% to 25\%. When the reinsurer is ambiguity-averse (the dashed lines, $\eps = 0.05, \, 0.1$), they increase the safety loadings for both insurers and consequently, both insurers buy less reinsurance, i.e., they retain more, as $\alpha_k^*$ is increasing with $\eps$.

    \cref{fig:prop_2dcomp} shows a contour plot of the systemic bivariate loss density. In the contours shown in the bottom left of the figure, which correspond to a density value of 0.18, the $\eps=0.2$ curve (blue) is contained within the $\eps=0.01$ curve (red). The pattern reverses as we move towards the upper-right corner of the figure. For the furthest contours, which correspond to a density value of 0.03, the $\eps=0.01$ curve (red) is now contained within the $\eps=0.2$ curve (blue). Therefore, as $\eps$ increases, the reinsurer puts more weight on higher losses. This is similar to the observation in the previous example, where the systemic losses were comonotonic (\cref{fig:1d_comp_prop}).

    \begin{figure}[hbt!]
        \centering
        \includegraphics{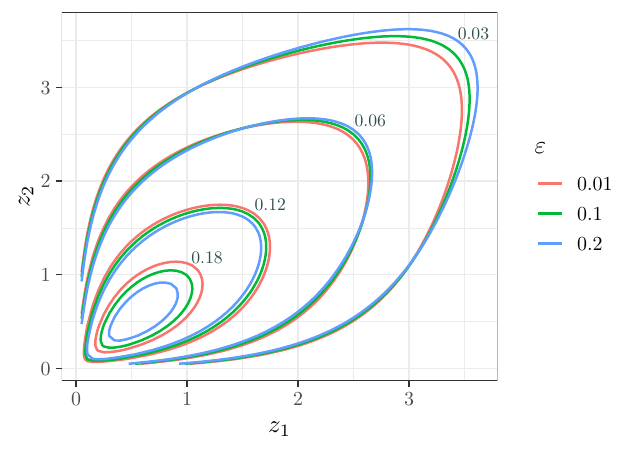}
        \caption{A contour plot of the density of the systemic loss distribution for three different values of the reinsurer's ambiguity aversion parameter, $\eps$. The insurers have risk aversion parameters $\gamma_1=\gamma_2=0.5$ and their severity distributions are Gamma distributed with parameters given in \cref{tab:prop_Gumbel_params}.}
        \label{fig:prop_2dcomp}
    \end{figure}
\end{example}

\subsection{Excess-of-loss reinsurance}
Next, we determine the optimal excess-of-loss reinsurance contract. Recall that for this type of contract, each insurer chooses a retention level, $\alpha_k^*$, beyond which the reinsurer covers the losses.

\begin{proposition}[Excess-of-loss reinsurance]
\label{prop:XL}
    Define insurer-$k$'s survival function, $\overline{F}_k(z_k) := 1 - F_k(z_k)$. Let $\boldalpha^*$ denote the solution of the following system of equations for $k \in \N$:
    \begin{equation*}
        \alpha_k^* = \frac{1}{\gamma_k}
        \log\left(\frac{\displaystyle \int_{\alpha_k^*}^\infty \varrho^*_k(z_k) \, dz_k + \int_{\Rpn} \!\! \Id_{\{z_k \geq \alpha_k^* \}} \, \vs^*(\bz) \, d\bz }{\displaystyle \lambda_k \left[\overline{F}_k(\alpha_k^*) - \gamma_k \int_{\alpha_k^*}^\infty \overline{F}_k(z_k) \, dz_k \right]} \right) \,,
    \end{equation*}
    where $\varrho_k^*$, $k \in \N$, and $\varsigma^*$ are given in \Cref{thm:stackelberg_eq} with $z-r(z,\alpha) = (z-\alpha)_+$.
    If such $\boldalpha^*$ exist, then 
    \begin{align*}
        \eta_k^* &= e^{\gamma_k \alpha^*_k} - 1 \,, \qquad \forall k\in\N  \,,
    \end{align*}
    and the solution to the Stackelberg game is $(\boldalpha^*,\boldeta^*,\varsigma^*, \bvrho^*)$.
\end{proposition}

Recall that, for fixed ambiguity aversion $\eps$, and fixed weights $\pi_k,\,k \in \N$, the reinsurance demand  $\alpha_k^* = \alpha_k^\dagger[\eta_k^*]$, is a function of the optimal safety loading $\eta_k^*$. Interestingly, the safety loading of the premia for insurer-$k$ depends on insurer-$k$'s absolute risk aversion $\gamma_k$. Thus, an insurer with larger risk aversion is willing to pay a larger markup for the same contract. Furthermore, as the reinsurer's compensator $\varsigma^*$ is a function of $\boldalpha^*$ and $\boldeta^*$, the way the reinsurer distorts the insurer's loss model depends both on the covered loss and the safety loadings.

To illustrate the solution, we again consider the case of two insurers with both systemic and idiosyncratic losses. Each insurer's model assumes the marginal losses are exponentially distributed, but with different means and intensities.

    \begin{table}[tbh!]
        \centering
        \begin{tabular}{l@{\hskip 0.4in} l l l l l }
            Parameter & $\theta$ & $\lambda^S$ & $\xi^S$ & $\lambda^I$ & $\xi^I$\\
            \hline
            Insurer-1's model & 1.67 & 2 & 1 & 1.5 & 1.25\\
            Insurer-2's model & 1.67 & 2.5 & 1.5 & 2 & 1.5\\
        \end{tabular}
        \caption{Parameters for the numerical illustration in \cref{ex:XL_Gumbel}. The systemic losses have a Gumbel copula with parameter $\theta$ and are marginally exponentially distributed with scale parameter $\xi^S$. The idiosyncratic losses are exponentially distributed with scale parameter $\xi^I$.}
        \label{tab:XL_Gumbel_params}
    \end{table}

\begin{example}
\label{ex:XL_Gumbel}
    Suppose there are two insurers and that, under both insurers' models, the systemic losses are marginally identically exponentially distributed with scale parameter $\xi^S$ and with a dependence structure given by a Gumbel copula with parameter $\theta = 10/6$. Suppose further that under both models each insurer's idiosyncratic losses are identically exponentially distributed with scale parameter $\xi^I$. The parameters used in the numerical illustration are given in \cref{tab:XL_Gumbel_params}. Again, insurer-2's model has greater expected losses.
    
    \cref{fig:XL_2ins} shows the retention limit of each insurer at equilibrium, $\alpha_k^*$, and the safety loading charged to them, $\eta_k^*$, as a function of the weight applied by the reinsurer to insurer-2's loss model, $\pi_2$. Again, three different values of the ambiguity aversion parameter, $\eps$, are shown. Similar to the case of proportional reinsurance (\cref{ex:prop_Gamma}, in particular \cref{fig:prop_2ins}), the safety loading charged to each insurer, $\eta_k^*$, is increasing as the weight applied to the distribution with greater expected losses, $\pi_2$, increases, and the amount of reinsurance purchased by the insurers is decreasing ($\alpha_k^*$ is increasing). Furthermore, when the reinsurer is more ambiguity averse ($\eps = 0.05, 0.1$, the dashed and dotted lines compared to the solid ones), the safety loadings are increased and so is the amount retained by the insurers. 

    \begin{figure}[tbh!]
        \centering
        \scalebox{0.97}{
        \includegraphics{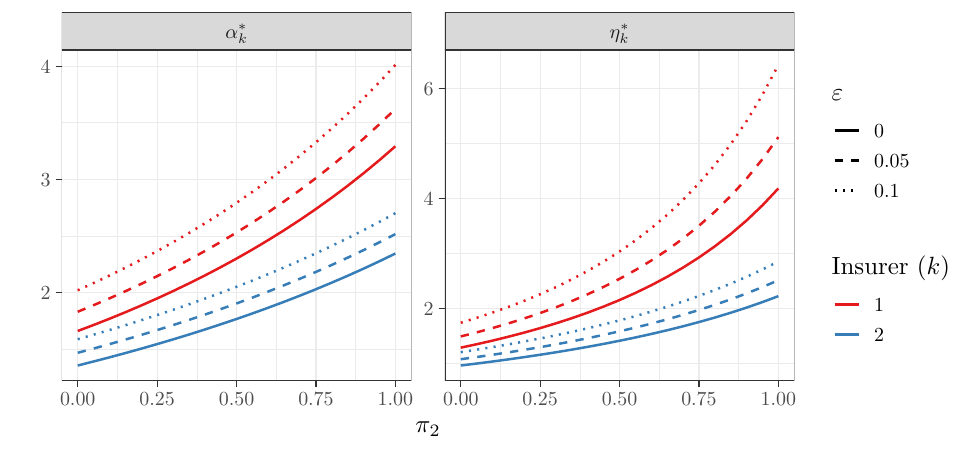}}
        \caption{The equilibrium retention limits, $\alpha_k^*$, and safety loadings, $\eta_k^*$, for insurer-$k$, $k=1,2$, as a function of the weight the reinsurer places on insurer-2's loss model, $\pi_2$. The insurers have risk aversion parameters $\gamma_1=\gamma_2=0.5$ and their severity distributions are exponentially distributed with parameters given in \cref{tab:XL_Gumbel_params}. In particular, insurer-2's model has greater expected losses.}
        \label{fig:XL_2ins}
    \end{figure}

    \cref{fig:XL_2dcomp} is a contour plot of the bivariate systemic loss density. The retention limits $\alpha^*_k$, $k=1,2$ create kinks in the joint density, which can be seen where the contour curve crosses the dashed line corresponding to the retention limit for that level of ambiguity aversion, $\eps$. Similar to the proportional reinsurance case, the reinsurer puts more weight into the tail of the joint distribution (upper right-hand corner) when they are more ambiguity averse (larger $\eps$).

    \begin{figure}[hbt!]
        \centering
        \includegraphics{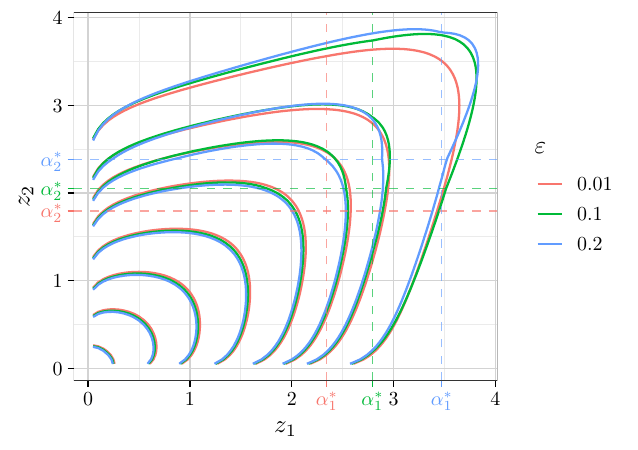}
        \caption{A contour plot of the density of the systemic loss distribution for three different values of the reinsurer's ambiguity aversion parameter, $\eps$. The retention limits of each insurer, $\alpha^*_k$, for each value of $\eps$, are marked with dashed lines. The insurers have risk aversion parameters $\gamma_1=\gamma_2=0.5$ and their severity distributions are exponentially distributed with parameters given in \cref{tab:XL_Gumbel_params}. In particular, insurer-2's model has greater expected losses.}
        \label{fig:XL_2dcomp}
    \end{figure}
\end{example}

Next, we consider the case where the excess-of-loss reinsurance is capped at a policy limit $\ell_k >0$, $k \in \N$.

\begin{proposition}[Capped excess-of-loss reinsurance]
\label{prop:XLPL}
Let $\boldalpha^*$ denote the solution of the following system of equations for $k \in \N$:
\begin{align*}
    \alpha_k^* &= \frac{1}{\gamma_k}\log
    \left(\frac{\displaystyle \int_{\alpha_k^*}^{\alpha_k^* + \ell_k}  \!\!\varrho^*_k(z_k) \, dz_k + \int_{\Rpn} \!\! \Id_{\{ \alpha_k^* \leq z_k \leq \alpha_k^* +\ell_k \}} \, \vs^*(\bz) \, d\bz}{\displaystyle \lambda_k \left[ F_k(\alpha_k^* + \ell_k) - F_k(\alpha_k^*) - \gamma_k \int_{\alpha_k^*}^{\alpha_k^* + \ell_k} \overline{F}_k(z_k) dz_k\right]}\right)\,,
\end{align*}
    where $\varrho_k^*$, $k \in \N$, and $\varsigma^*$ are given in \Cref{thm:stackelberg_eq} with $z - r(z,\alpha) = \min\{ (z-\alpha)_+, \ell_k \}$ and $\overline{F}_k(z_k) := 1 - F_k(z_k)$ is insurer-$k$'s survival function.
    If such $\boldalpha^*$ exist, then
    \begin{align*}
        \eta_k^* &= e^{\gamma_k \alpha^*_k} - 1 \,, \qquad \forall k\in\N  \,,
    \end{align*}
    and the solution to the Stackelberg game is $(\boldalpha^*,\boldeta^*,\varsigma^*, \bvrho^*)$.
\end{proposition}

We observe that as the policy limit gets large, $\ell_k \to \infty$, the retention level of the insurers, $\boldalpha^*$, approaches the retention level in \cref{prop:XL}, i.e., when there is no policy limit on the excess-of-loss reinsurance. As the safety loadings $\boldeta^*$ and the optimal compensator $\vs^*$ depend on $\boldalpha^*$, the Stackelberg equilibrium from \cref{prop:XL} (no policy limit) is the limiting case of the capped excess-of-loss reinsurance equilibrium as the policy limit goes to infinity.  Furthermore, the premium charged by the reinsurer when there is a policy limit,
\begin{equation*}
    p^R_k = 
    (1+\eta_k^*)\, \lambda_k \int_{\alpha_k^*}^{\alpha_k^* + \ell_k} \overline{F}_k(z_k) dz_k  \, ,
\end{equation*}
approaches the premium charged when the reinsurance coverage is not capped as $\ell_k \to \infty$.

  To illustrate the solution and compare the excess-of-loss policies with and without a policy limit, we return to the simpler setting where the systemic losses are comonotonic --- the model for the systemic loss is a function of a single scalar variable --- and there are no idiosyncratic losses.

\begin{example}
    \label{ex:XLPL}
    Suppose there are two insurers and the systemic losses are comonotonic and identically marginally distributed under both insurers' models. Each insurer assumes that the marginal losses are exponentially distributed with scale parameters $\xi_k\in(0,\min \{\frac{1}{\gamma_k},\,\frac{1}{2\eps} \})$, and intensity $\lambda_k>0$, for models $k=1,2$. To focus our analysis on the systemic model, we assume there are no idiosyncratic losses. We consider excess-of-loss contracts with a policy limit for each insurer $\ell_k >0$ for $k=1,2$.

    In the following numerical example, the insurers' scale parameters and intensities are $\xi_1 = 1, \, \lambda_1 = 2$ and $\xi_2 =1.25, \, \lambda_2 = 2.5$, under models 1 and 2, respectively. In particular, insurer-2 has greater expected losses than insurer-1, as it has on average more frequent claims, and the mean of their claim distribution is larger. Both insurers have the same risk aversion parameter $\gamma_1=\gamma_2=0.5$.

    \begin{figure}[hbt!]
    \centering
    \subfloat[Reinsurer's severity distribution, no policy limit]{
    \scalebox{0.85}{\includegraphics{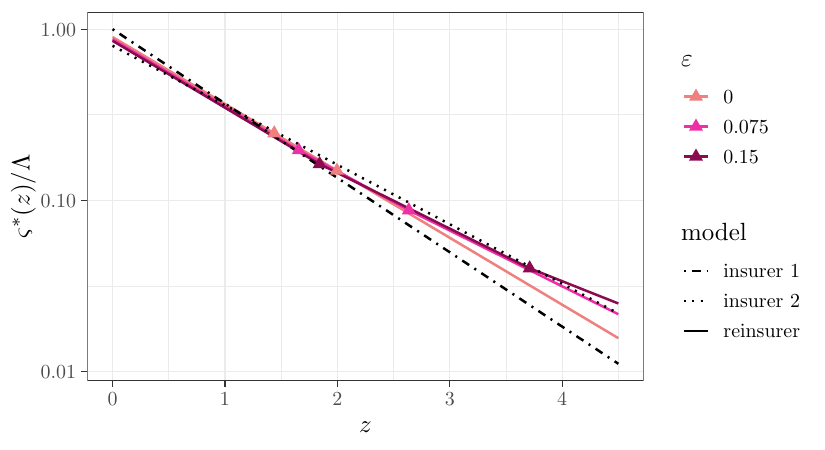}} \label{fig:1d_comp_XL}}
    \hspace{1em}
    \subfloat[Reinsurer's severity distribution, policy limit $\ell_1=\ell_2=1$]{
    \scalebox{0.85}{\includegraphics{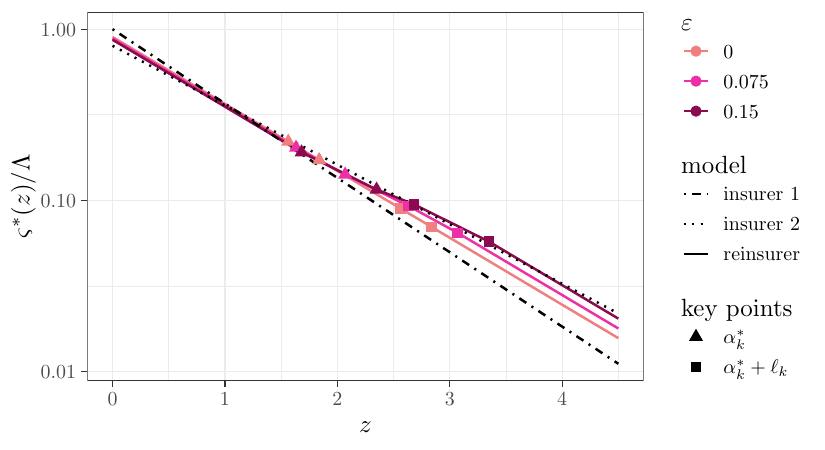}}\label{fig:1d_comp_XLPL}}
    \caption{The severity distributions of the systemic loss on a log scale under insurer-1's model (dot-dashed line), insurer-2's model (dotted line), and the reinsurer's model (solid lines) for three values of $\eps$. The triangles mark the retention limits of each insurer, $\alpha_1^*$ and $\alpha_2^*$, and in panel (b), the squares mark the upper limit of the reinsurance coverage, $\alpha_1^* + \ell_1$ and $\alpha_2^* + \ell_2$.
    The insurers have risk aversion parameters $\gamma_1=\gamma_2=0.5$, intensities $\lambda_1 = 2,  \, \lambda_2 = 2.5$, and the systemic losses are comonotonic under both models, with identical exponentially distributed marginals with scale parameters $\xi_1 = 1, \, \xi_2 =1.25$. In particular, insurer-2's model has greater expected losses. The insurers' models are equally weighted with $\pi_1 = \pi_2 =0.5$.}
    \label{fig:XL_comps_como}
  \end{figure}

    \cref{fig:1d_comp_XL} plots the severity distributions of insurer-1 (dot-dashed line), insurer-2 (dotted line), and the reinsurer (solid lines) on a log scale in the case with no policy limit. We assume that the insurers' models are equally weighted by the reinsurer, i.e., $\pi_1 = \pi_2 =0.5$. The retention limits of each insurer, $\alpha_1^*$ and $\alpha_2^*$, are marked with triangles. We consider three levels of ambiguity-aversion for the reinsurer: $\eps=0,\,0.075, \, 0.15$, where the first case corresponds to the reinsurer being ambiguity-neutral. When $\eps=0$, the reinsurer's severity distribution is the normalised weighted geometric mean of the insurers' severity distributions, which appears as a straight line on the log scale as it is exponentially distributed. Including ambiguity aversion shifts the tail of the reinsurer's severity distribution up, creating kinks at $z=\alpha_1^*$ and $z= \alpha_2^*$. Thus, when the reinsurer is ambiguity-averse, they add weight to the parts of the distribution that are past each insurer's retention limit. Note that for all $\eps$, $\alpha_2^*  < \alpha^*_1$, and for larger ambiguity of the reinsurer, we observe that $\alpha_k^*$ shifts to the right, indicating that both insurers buy less reinsurance. 
    
    \cref{fig:1d_comp_XLPL} shows the reinsurer's severity distribution on the log scale when the policy limits are $\ell_1 = \ell_2 = 1$, which corresponds to the 63rd and 55th quantiles of the insurer-1 and insurer-2' loss distributions, respectively, prior to purchasing reinsurance. We assume both insurers' models are equally weighted ($\pi_1 = \pi_2 =0.5$). Similar to \cref{fig:1d_comp_XL}, at the retention limits, $\alpha_k^*$ (marked with triangles), the compensator shifts away from the normalised geometric mean when the reinsurer is ambiguity averse ($\eps > 0$). 
    However, once the policy limits of both insurers are reached (at $\alpha_k^* + \ell_k$, $k=1,2$, marked with squares), the severity distribution again has an exponential decay.

    \cref{fig:XLPL_vars} shows the insurers' retention limits, $\alpha_k^*$, their safety loadings, $\eta_k^*$, as well as the premium charged, $p_k^R$, as a function of the policy limit, $\ell_k$. The faint horizontal lines give the optimal values when there is no policy limit. As expected, we observe that in all cases the variables approach the no policy limit equilibrium as $\ell_k$ gets large --- note that $\ell_k = 6$ corresponds approximately to the 99th quantile of both insurers' loss distributions.

    \begin{figure}[htb!]
        \centering
        \scalebox{0.9}{\includegraphics{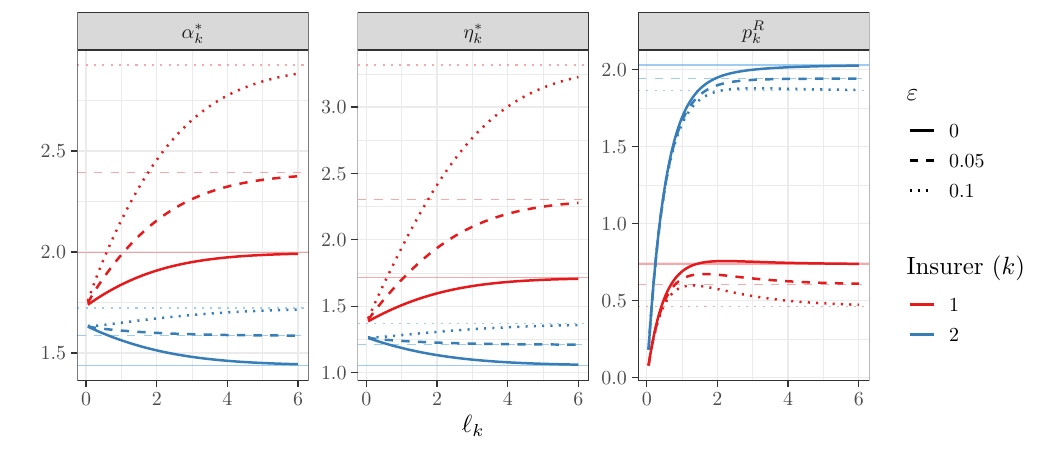}}
        \caption{The retention limit, $\alpha_k^*$, the safety loading, $\eta_k^*$, and the premium, $p_k^R$, charged to each insurer as a function of the policy limit, $\ell_k$. The dotted horizontal lines give the corresponding values for the case of no policy limit. The insurers have risk aversion parameters $\gamma_1=\gamma_2=0.5$, intensities $\lambda_1 = 2,  \, \lambda_2 = 2.5$, and the systemic losses are comonotonic under both models, with identical exponentially distributed marginals. The exponential scale parameters are $\xi_1 = 1, \, \xi_2 =1.25$ under the models of insurer-1 and insurer-2, respectively.}
        \label{fig:XLPL_vars}
    \end{figure}
    
    First, examine the equilibrium values for insurer-1 (red lines). Their retention limit, $\alpha_1^*$, is increasing in $\ell_1$ for all levels of ambiguity aversion, $\eps$. Moreover, the safety loading, $\eta_1^*$, is also increasing in $\ell_1$ for all $\eps$. The premium, $p^R_1$, however, is not monotone in $\ell_1$, as it depends on both the reinsurance purchased and the safety loading. Indeed, we observe that for large ambiguity, $\eps = 0.1$, the safety loading is steeply increasing in $\ell_1$, which results that insurer-1's demand of reinsurance drastically decreases.

    The equilibrium values for the riskier insurer (insurer-2, blue lines) have different behaviour with increasing $\ell_2$: when the reinsurer is not ambiguity-averse, insurer-2 decreases their retention limit as the policy limit increases, purchasing more reinsurance. Moreover, the safety loading is also decreasing in $\ell_2$, however, the combined result is that insurer-2 pays a larger premium, $p_2^R$, as $\ell_k$ increases. This behaviour is dampened when the reinsurer is ambiguity-averse ($\eps =0.1$), as the safety loading $\eta_2^*$ becomes increasing in $\ell_2$.
\end{example}

\section{Reinsurer's utility maximisation strategy under ambiguity}
\label{sec:utility_max}

In the literature, the problem of a reinsurer who maximises their expected utility, rather than their expected wealth, has been well-studied for the case of one insurer. Examples include \textcite{Yi2013}, \textcite{YiViensLiZeng2015}, \textcite{Hu2018continuoustime}, \textcite{Hu2018}, and \textcite{Li2018}. In this section, we compare our problem to this related utility-maximisation problem.
For this, we assume that the reinsurer, like the insurers, has a CARA utility function $u(x) = -\frac{1}{m} e^{-m x}$ with risk aversion parameter $m > 0$. To obtain an analytical solution, we follow the convention in the literature and adopt a state-dependent ambiguity preference term for the reinsurer.  First introduced by \textcite{Maenhout2004}, one generalises the ambiguity aversion parameter $\eps$ to one that depends on time and the reinsurer's wealth, $(t,y)$, as follows:
\begin{equation*}
    \varphi(t,y) := - \frac{\eps}{m \, \Phi (t,y)} \, ,
\end{equation*}
where $\Phi(\cdot,\cdot)$ is the value function of the reinsurer's utility maximisation problem.
This type of ambiguity aversion has been used extensively in reinsurance games in continuous time, and the result is that the reinsurer's ambiguity aversion scales inter-temporally with the value function. These changes give the following modified value function for the reinsurer:
\begin{equation}
    \begin{split}
    \Phi(t,y) &= \sup_{\eta\in\mfC} \inf_{\boldsymbol{\sigma} \in \mathcal{V}} \, \mathbb{E}^{\Q^{\boldsymbol{\sigma}}}_{t,y}  \Bigg[ u(Y_{T}) +   \sum_{j \in\N} \pi_{j} \Bigg(\, \int_t^T \!\! \int_{\Rpn}  \frac{1}{\varphi(u,Y_{u-})} \!\left( \varsigma_u(\bz) \left[ \log \left( \frac{\varsigma_u(\bz)}{v_{j}(\bz)}\right) - 1 \right] + v_{j}(\bz) \right)  d\bz\,  du \\
    & \qquad\qquad\qquad\quad + \sum_{k\in\N} \int_t^T \!\! \int_{\R_+} \!\!  \frac{1}{\varphi(u,Y_{u-})} \left(\varrho_{u,k}(z_k) \left[ \log \left( \frac{\varrho_{u,k}(z_k)}{w_{k,j}(z_k)}\right) - 1\right] + w_{k,j} (z_k) \right) dz_k \,du  \Bigg) \Bigg] \,.
    \label{eq:modified_valuefunc}
    \end{split}
\end{equation}

The solution to the reinsurer's problem with this modified value function is given in the following proposition. We assume the insurers' problem is unchanged, so the solution to their problem is the one given in \cref{prop:insurer_optim}. 

\begin{proposition}
\label{prop:scaled_prob}
    For $\boldsymbol{c}:=(c_1, \ldots, c_n)\in\R^n$, define the following parameterised set of compensators: 
    \begin{align*}
	   \varrho_k^\dagger(z_k, c_k) & :=  w_k^g(z_k) \exp \left\{ \frac{\eps}{m} \left[ \exp \left( {m[z_k - r(z_k, \alpha_k^\dagger[c_k] )]} \right) - 1 \right]\right\} \,, \quad \forall k \in \N \,, \text{ and} \\
	   \varsigma^\dagger(\bz,\boldsymbol{c}) 
        :&= 
        v^g(\bz)\; \exp \left\{ \frac{\eps}{m} \left[ \exp \left(m{\sum_{k\in\N} \left[ z_k - r(z_k,  \alpha_k^\dagger[c_k]) \right]} \right) -1 \right]  \right\} \,.
    \end{align*}
    Further, suppose that the system of non-linear equations in $c_k$, $k\in\N$,
    \begin{equation}
        \begin{split}
            1+c_k = &
            \frac{\displaystyle \int_{\R_+} \!\!\! \left(z_k - r(z_k,\alpha_k^\dagger[c_k] )\right) F_k(dz_k) }{ \displaystyle \int_{\R_+} \!\!\!  \partial_c \, r(z_k,\alpha_k^\dagger[c])\big|_{c=c_k} F_k(dz_k) } + \frac{\displaystyle \int_{\R_+} \!\!  e^{m[z_k - r(z_k, \alpha_k^\dagger[c_k] )]} \partial_c \, r(z_k,\alpha_k^\dagger[c])\big|_{c=c_k} \, \varrho_k^\dagger(z_k, c_k) \, dz_k}{ \displaystyle \lambda_k \int_{\R_+} \!\!\!  \partial_c \, r(z_k,\alpha_k^\dagger[c])\big|_{c=c_k} F_k(dz_k) } \\
            &+ \frac{\displaystyle \int_{\Rpn} \! e^{m \sum_{k\in\N} \left[ z_k - r(z_k,  \alpha_k^\dagger[c_k]) \right]} \partial_c \, r(z_k,\alpha_k^\dagger[c])\big|_{c=c_k} \, \varsigma^\dagger(\bz,\boldsymbol{c}) \, d\bz}{ \displaystyle \lambda_k \int_{\R_+} \!\!\!  \partial_c \, r(z_k,\alpha_k^\dagger[c])\big|_{c=c_k} F_k(dz_k) }
            \label{eqn:ck_modified}
        \end{split}
        \raisetag{\normalbaselineskip}
    \end{equation}
    has a solution, which we denote by $\boldsymbol{\eta}^*$.
    Then, the reinsurer's optimal compensators in feedback form for \eqref{eq:modified_valuefunc}  are $\bvrho^*(\cdot) = (\varrho_{1}^\dagger(\cdot, \eta^*_1),\ldots,\varrho^\dagger_{n}(\cdot, \eta_n^*))$ and $\varsigma^*(\cdot):=\varsigma^\dagger(\cdot,\boldsymbol{\eta}^*)$.
\end{proposition}

The proof of this proposition is similar to the proof of \cref{prop:reinsurer_optim} and can be found in \cref{sec:app}.
Combining this result with \cref{prop:insurer_optim}, we obtain the Stackelberg equilibrium for the case where the reinsurer has exponential utility and a scaled entropy penalty.

\begin{theorem}
Under the assumptions of \cref{prop:insurer_optim} and \cref{prop:scaled_prob}, the Stackelberg equilibrium when the reinsurer maximises their exponential utility is $(\boldalpha^*, \boldeta^*, \varsigma^*,\boldsymbol{\varrho}^*)$, where $\boldalpha^* = (\alpha^*_1, \ldots, \alpha^*_n)$, with $\alpha^*_k:= \alpha^\dagger_k[\eta_k^*]$, $k \in \N$, $\boldsymbol{\varrho}^* = (\varrho_1^*(\cdot), \ldots, \varrho_n^*(\cdot))$ where for each $k \in \N$,
\begin{equation*}
    \varrho_k^*(z_k) =  w_k^g(z_k) \exp \left\{ \frac{\eps}{m} \left[ \exp \left( {m[z_k - r(z_k, \alpha_k^*)]} \right) - 1 \right]\right\} \, ,
\end{equation*}
and
\begin{equation*}
    \varsigma^*(\bz) = v^g(\bz)\; \exp \left\{ \frac{\eps}{m} \left[ \exp \left(m{\sum_{k\in\N} \left[ z_k - r(z_k,  \alpha_k^*) \right]} \right) -1 \right]  \right\} \, .
\end{equation*}
    The $\boldeta^* = (\eta_1^*, \ldots, \eta_n^*)$ satisfy the system of equations
    \begin{align*}
        0 &= \int_{\R_+}  \!\! \partial_a \, r(z_k,a)\big|_{a=\alpha^\dagger[\eta_k^*]} \left\{(1+\eta_k^*)-e^{\gamma_k r(z_k,\alpha^\dagger[\eta_k^*])}\right\} F_k(dz_k) \,, \quad \forall k\in\N \, ,  \\
        1 + \eta_k^* &= \frac{\displaystyle \int_{\R_+} \!\!\! \left(z_k - r(z_k,\alpha_k^* )\right) F_k(dz_k) }{ \displaystyle \int_{\R_+} \!\!\!  \partial_c \, r(z_k,\alpha_k^\dagger[c])\big|_{c=\eta_k^*} F_k(dz_k) } +  \frac{\displaystyle\int_{\R_+} \!\!  e^{m[z_k - r(z_k, \alpha_k^* )]} \partial_c \, r(z_k,\alpha_k^\dagger[c])\big|_{c=\eta_k^*} \, \varrho_k^*(z_k) \, dz_k}{ \displaystyle \lambda_k \int_{\R_+} \!\!\!  \partial_c \, r(z_k,\alpha_k^\dagger[c])\big|_{c=\eta_k^*} F_k(dz_k) } \\
        &+ \frac{\displaystyle \int_{\Rpn} \! e^{m \sum_{k\in\N} \left[ z_k - r(z_k,  \alpha_k^*) \right]} \partial_c \, r(z_k,\alpha_k^\dagger[c])\big|_{c=\eta_k^*} \, \varsigma^*(\bz) \, d\bz}{ \displaystyle \lambda_k \int_{\R_+} \!\!\!  \partial_c \, r(z_k,\alpha_k^\dagger[c])\big|_{c=\eta_k^*} F_k(dz_k) }
        \,, \quad \forall k\in\N   \,. 
    \end{align*}
\end{theorem}

The main difference to the case where the reinsurer maximises their expected wealth (\cref{thm:stackelberg_eq}) is the reinsurer's compensators. Here, for both the compensators for the idiosyncratic and systemic losses, the weighted geometric mean of the insurers' compensators is multiplied by a double exponential term which depends on the (aggregate) loss of the reinsurer, as opposed to a single exponential term. In addition, the second set of equations has an additional exponential term in the second and third integrals in the numerator. 

Comparing the compensators $\varsigma^*$, $\varrho^*_k$, $k\in\N$, from the two problems, we observe that the compensators in the utility-maximisation case are greater than or equal to the respective compensators of the wealth-maximisation case for all $m>0$ and any non-negative loss. We also observe that as $m \to 0$, that is, diminishing absolute risk aversion, the compensators approach those in the wealth-maximising case. Furthermore, the entire Stackelberg equilibrium approaches the one where the reinsurer maximises their expected wealth (\cref{thm:stackelberg_eq}) as $m \to 0$.

Next, we observe that when the reinsurer is both risk-averse (CARA utility, $m>0$) and ambiguity-averse ($\eps > 0$), they adopt a model that puts more weight on tail losses. Furthermore, as $z_k - r(z_k,\alpha^*)$ is a function of $z_k$, its double exponential is not integrable for many common types of reinsurance contracts and common choices of loss distributions. For example, excess-of-loss reinsurance without a policy limit and proportional reinsurance require the existence of the moment generating function of the random variable $e^Z$, where $Z$ is the aggregate loss of the reinsurer. Thus, in these cases, the reinsurance contracts cannot be priced for the case when the reinsurer is both ambiguity- and risk- averse. 

Moreover, our solution is consistent with the literature in the following sense. For example, \textcite{Hu2018} allow for measure changes which depend on time only and find that for an excess-of-loss reinsurance contract, the optimal probability distortion is
\begin{equation*}
    \varsigma^* = v(z) \exp \left(\frac{\eps}{m} \left[ \left( \int_{\alpha^*}^\infty e^{m(z-\alpha^*)} - 1  \right) F(dz) \right] \right)
\end{equation*}
(adjusting for notational differences and differences in investment behaviour of the reinsurer). In our setting, allowing for time and state dependence and in the special case of a single insurer with a single source of loss and an excess-of-loss contract, we obtain
\begin{equation*}
    \varsigma^*(z) = v(z)  \exp \left( \frac{\eps}{m} \left[ e^{m  (z - \alpha^*)} - 1 \right] \right) \, .
\end{equation*}
Thus, the difference is an expectation within the exponential. Allowing the probability distortion to depend on the loss size (equivalently, allowing the severity distribution to change under the optimal measure) generalises the results of \textcite{Hu2018}. It also illustrates an important point: the problem where the reinsurer is both risk-averse and ambiguity-averse is ill-posed when the reinsurer's probability distortion can depend on the size of the loss $z$, as the reinsurer seeks to avoid risk to an extent to which they cannot price their reinsurance contracts. Thus, it is preferable to consider the case of a wealth-maximising reinsurer, as in this paper, to not penalise tail losses twice.

\section{Conclusion}
We study optimal reinsurance in continuous time via a Stackelberg game. Specifically, we consider the novel problem of an ambiguity-averse reinsurer who interacts with multiple insurers, where each insurer has a different loss model and utility function. The insurable losses consist of systemic losses, which affect all insurers, and idiosyncratic losses, which affect insurers individually. Each insurer chooses the portion of their losses to reinsure that optimises their expected utility, while the reinsurer sets prices that maximise their expected wealth and simultaneously accounts for ambiguity around the insurers' loss models. In particular, the reinsurer has (a) a priori beliefs in each insurer's model, and then penalises each model with the Kullback-Leibler divergence, and (b) an overall ambiguity factor indicating uncertainty in the insurance market. We solve this Stackelberg game for a unified framework of reinsurance contracts and allow the reinsurer's probability distortion to depend on time, claim frequency, and claim severity. We find that at the Stackelberg equilibrium, the reinsurer sets prices under a model which is closely related to the barycentre of the insurers' loss models. Indeed if the reinsurer is ambiguity neutral, they choose the barycentre model, while if the reinsurer is ambiguity averse, they weight the tail of their losses more heavily. We illustrate the Stackelberg equilibrium on proportional reinsurance and excess-of-loss reinsurance with and without a policy limit.

We further compare our solution to the case where the ambiguity-averse reinsurer maximises their CARA utility, generalising known results in the literature to account for ambiguity in the severity distribution. We find that a solution to the utility maximisation problem only exists in very restrictive cases, such as when the losses covered by the reinsurer are bounded. In this case, the safety loadings charged to the insurers are increasing in the reinsurer's risk aversion parameter.

\section*{Acknowledgements}
EK is supported by an NSERC Canada Graduate Scholarship-Doctoral. SJ and SP would like to acknowledge support from the Natural Sciences and Engineering Research Council of Canada (grants RGPIN-2018-05705 and RGPIN-2024-04317,  and DGECR-2020-00333, RGPIN-2020-04289). SP also acknowledges support from the Canadian Statistical Sciences Institute (CANSSI).

\begin{appendices}

\crefalias{section}{appendix}

\section{Proof of \cref{prop:scaled_prob}}
\label{sec:app}
We write the reinsurer's value function as
\begin{align*}
    \Phi(t,y) = \sup_{\eta\in\mfC} \inf_{\boldsymbol{\sigma} \in \mathcal{V}} \, \mathbb{E}^{\Q^{\boldsymbol{\sigma}}}_{t,y}  \Bigg[ u(Y_{T}) -
    \frac{m}{\eps} &\Bigg(\, \sum_{k\in\N} \int_t^T \!\! \int_{\R_+} \!\!\! \Phi(u,Y_{u-})\! \left(\varrho_{u,k}(z_k) \left[ \log \left( \frac{\varrho_{u,k}(z_k)}{w^g_{k}(z_k)}\right) - 1\right] + w^a_{k} (z_k) \right) \! dz_k \,du \\
    & \quad\, + \int_t^T \!\! \int_{\Rpn} \!\!\! \Phi(u,Y_{u-}) \!\left( \varsigma_u(\bz) \left[ \log \left( \frac{\varsigma_u(\bz)}{v^g(\bz)}\right) - 1 \right] + v^a(\bz) \right)  d\bz\,  du  \Bigg) \Bigg] \,.
\end{align*}
By the dynamic programming principle, the value function $\Phi$  satisfies the HJBI equation:
\begin{align*}
    0 = \partial_t \Phi(t,y) + \sup_{\boldsymbol{c}\in\R^n_+} \inf_{\substack{\psi \in \G, \\ \xi_1, \ldots, \xi_n \in \mathcal{H}}}  \Bigg\{ & \sum_{k\in\N} p_k^R\big(\alpha^\dagger_k[c_k], c_k \big) \partial_y \Phi(t,y) \\
    &+ \int_{\Rpn}  \left[ \Phi\left(t,y-\sum_{k\in\N} \big[z_k-r(z_k,\alpha^\dagger_k[c_k])\big] \right) \,  - \Phi(t,y) \right] \psi(t,\bz) \, d\bz 
    \\
    &+  \sum_{k\in\N} \int_{\R_+}  \left[ \Phi\left(t,y- \big[z_{k}-r(z_{k},\alpha^\dagger_k[c_k])\big] \right) \,  - \Phi(t,y) \right] \xi_k(t,z_k) \, dz_k
    \\
    & -  \left( \frac{m}{\eps} \right) \Phi(t,y) \int_{\Rpn} \left( \psi(t,\bz) \left[ \log \left( \frac{\psi(t,\bz)}{v^g(\bz)}\right) - 1 \right] + v^a(\bz) \right)  d\bz \\
    & - \left( \frac{m}{\eps} \right) \Phi(t,y) \sum_{k\in\N} \int_{\R_+} \left( \xi_{k}(t,z_k) \left[ \log \left( \frac{\xi_{k}(t,z_k)}{w_{k}^g(z_k)}\right) - 1 \right] + w_{k}^a (z_k) \right)  dz_k \Bigg\} \, 
\end{align*}
with terminal condition $\Phi(T,y) = -\tfrac{1}{m} e^{-my}$.

Given the functional form of the terminal condition, we make the Ansatz $\Phi(t,y) = -\tfrac{1}{m}\,e^{-m[f(t) + g(t) y]}$
for deterministic functions $f(t)$ and $g(t)$ satisfying terminal conditions $ f(T) = 0$ and $g(T) = 1$. Upon substituting the Ansatz into the HJBI equation and simplifying, we obtain the ODE
\begin{align*}
    0 = mf'(t) + \sup_{\boldsymbol{c}\in\R^n_+} \inf_{\substack{\psi \in \G, \\ \xi_1, \ldots, \xi_n \in \mathcal{H}}} \Bigg\{ & m \sum_{k\in\N} p_k^R\left(\alpha^\dagger_k[c_k], c_k \right)
    + \, \frac{m}{\eps} \left(\int_{\Rpn} \!\! v^a(\bz) d\bz + \int_{\R_+} w^a_k(z_k) dz_k  \right) \\
    & - \int_{\Rpn} \psi(t,\bz) \left[ \frac{m}{\eps}  \left( \log \left( \frac{\psi(t,\bz)}{v^g(\bz)}\right) - 1 \right) + 1 - \exp\left( \sum_{k\in\N} m \big[z_k-r(z_k,\alpha^\dagger_k[c_k])\big] \right) \right]  d\bz \\
    & -\sum_{k\in\N} \int_{\R_+} \!\!\! \xi_{k}(t,z_k) \left[ \frac{m}{\eps} \left( \log \left( \frac{\xi_{k}(t,z_k)}{w_{k}^g(z_k)}\right) - 1 \right) + 1 - \exp\left( m \big[z_k-r(z_k,\alpha^\dagger_k[c_k])\big] \right) \right]  dz_k
    \Bigg\} \, , 
\end{align*}
with terminal condition $ f(T) = 0$ and that $g(t)=1$ for all $t\in[0,T]$.

We first consider the inner optimisation and in a first step optimise over $\psi$, and in a second step over $\xi_1, \ldots, \xi_n$. The infimum problem, where we minimise the functional of $\psi$, is given by
\begin{equation*}
    \mathcal{L}[\psi] = \int_{\Rpn} \psi(t,\bz) \left[ \frac{m}{\eps}  \left( \log \left( \frac{\psi(t,\bz)}{v^g(\bz)}\right) - 1 \right) + 1 - \exp\left( \sum_{k\in\N} m \big[z_k-r(z_k,\alpha^\dagger_k[c_k])\big] \right) \right]  d\bz \, .
\end{equation*}

Let $\delta >0$ and $h$ be an arbitrary function such that $\delta h + \psi \in \G$. Applying a variational first order condition, we obtain
\begin{align*}
    0 &= \lim_{\delta \to 0} \frac{\mathcal{L}[ \psi + \delta\,h] - \mathcal{L}[\psi]  }{\delta} = \int_{\Rpn} h(\bz) \left[\frac{m}{\eps} \log \left( \frac{\psi(t,\bz)}{v^g(\bz)}\right) + 1 - \exp\left( \sum_{k\in\N} m \big[z_k-r(z_k,\alpha^\dagger_k[c_k])\big] \right) \right] d\bz \, .
\end{align*}
As $h$ was arbitrary, this implies that the term in the square brackets must vanish; therefore, the optimal compensator in feedback form, $\vs^*$, is
\begin{equation*}
    \varsigma^*_t(\bz, \boldsymbol{c} ) = v^g(\bz)\, \exp \left\{ \frac{\eps}{m} \left[ \exp \left(m{\sum_{k\in\N} \left[ z_k - r(z_k,  \alpha_k^\dagger[c_k]) \right]} \right) -1 \right]  \right\} \,. 
\end{equation*}

In the second step, we use a similar procedure to solve individually for each $\xi_k$, $k\in\N$, giving each optimal compensator in feedback form, i.e. $\varrho^*_k$:
\begin{equation*}
    \varrho^*_{t,k}(z_k,c_k) = w_k^g(z_k) \exp \left\{ \frac{\eps}{m} \left[ \exp \left( {m\,[z_k - r(z_k, \alpha_k^\dagger[c_k] )]} \right) - 1 \right]\right\} \, .
\end{equation*}
Substituting this into the HBJI equation above, we have that
\begin{equation*}
        0 = m f'(t) + \sup_{\boldsymbol{c}\in\R^n_+} \Bigg\{  m \sum_{k\in\N} p_k^R\big(\alpha^\dagger_{k}[c_k], c_k\big) + \, \frac{m}{\eps} \left(\int_{\Rpn} \!\! \left[ v^a(\bz) -  \varsigma^*_t(\bz, \boldsymbol{c}) \right] d\bz + \int_{\R_+} \left[w^a_k(z_k) - \varrho^*_{t,k}(z_k,c_k) \right]dz_k  \right) \Bigg\} \, .
\end{equation*}
Next, the derivative of the term in curly braces with respect to $c_k$, $k \in \N$ is 
\begin{align*}
    \frac{\partial}{\partial c_k} \{ \cdots \} =&
     m \, \lambda_k \left[ \int_{\R_+} \!\!\! \left[ z_k - r(z_k,\alpha_k^\dagger[c_k] )\right] F_k(dz_k)
    - m (1+c_k) \! \int_{\R_+} \!\!\!  \partial_{c_k} r(z_k,\alpha_k^\dagger[c_k])  F_k(dz_k) \right]
    \\
    & + m \int_{\R_+} e^{m \left[ z_k - r(z_k,  \alpha_k^\dagger[c_k]) \right]} \,\partial_{c_k}r(z_k,\alpha_k^\dagger[c_k]) \, \varrho_{t,k}^*(z_k, c_k) \,dz_k \\
    & +  m \int_{\Rpn} e^{m \sum_{k\in\N}  \left[ z_k - r(z_k,  \alpha_k^\dagger[c_k]) \right]} \,
    \partial_{c_k}r(z_k,\alpha_k^\dagger[c_k]) \,\varsigma^*_t(\bz, \boldsymbol{c}) \,d\bz\,.
\end{align*}
The first order conditions require setting this expression to $0$ for all $k\in\N$ and all $t\in[0,T]$. Hence, we obtain that the optimal $\eta_{k,t}^*$ satisfy the non-linear equations given in \eqref{eqn:ck_modified}. As there is no explicit time dependence in the above equation, the roots of these non-linear equations are constant over time. \qed

\end{appendices}
\printbibliography

\end{document}